\newif\ifconf
\conftrue

\newif\ifcomm
\commfalse

\newif\ifblind
\blindtrue

\newif\ifs
\sfalse

\documentclass[sigconf]{acmart}

\usepackage[english]{babel}
\usepackage{blindtext}
\usepackage{subcaption}
\renewcommand\footnotetextcopyrightpermission[1]{} %
\setcopyright{none}

\acmDOI{}

\acmISBN{}

\acmConference[]%
\acmYear{}%

\acmPrice{}

\usepackage{graphicx}
\usepackage{xurl}
\usepackage[normalem]{ulem} %
\usepackage{xcolor} %

\usepackage{xspace}
\usepackage{balance}
\usepackage{booktabs}  %
	
\usepackage{array}
	\newcolumntype{C}[1]{>{\centering\let\newline\\\arraybackslash\hspace{0pt}}m{#1}}

\usepackage[inline,shortlabels]{enumitem} %
\newcommand{\bl}{\begin{enumerate*}[(1)]}
\newcommand{\el}{\end{enumerate*}\xspace}

\newtheoremstyle{boldthm}{}{}{\itshape}{}{\bfseries}{.}{ }{\thmname{#1}\thmnumber{ #2}\thmnote{ (#3)}} %
\theoremstyle{boldthm}

\newtheorem{theorem}{Theorem}%

\newcommand{\bp}{\begin{proof}}
\newcommand{\bpo}{ \begin{proof}[Proof Outline] }
\newcommand{\ep}{\end{proof}}       %
\newcommand{\proofof}[1]{\begin{proof}[Proof of #1]} %

\usepackage[capitalize,noabbrev]{cleveref}
\crefname{equation}{Eq.}{Eqs.}
\Crefname{equation}{Eq.}{Eqs.}
\crefname{figure}{Fig.}{Figs.}
\Crefname{figure}{Fig.}{Figs.}
\crefname{table}{Table}{Tables}
\Crefname{table}{Table}{Tables}
\crefname{property}{Property}{Properties}
\Crefname{property}{Property}{Properties}
\crefname{section}{\S}{Secs.}
\Crefname{section}{\S}{Secs.}
\crefname{subsection}{\S}{Secs.}
\Crefname{subsection}{\S}{Secs.}
\crefname{appendix}{Appendix}{Appendices}
\Crefname{appendix}{Appendix}{Appendices}

\ifcomm
  \newcommand{\mycomm}[3]{{\footnotesize{{\color{#2} \textbf{[#1: #3]}}}}}
  \newcommand{\Fmycomm}[3]{{\color{red} \footnote{{{\color{#2} \textbf{[#1: #3]}}} }}}
\else
    \newcommand{\mycomm}[3]{}
    \newcommand{\Fmycomm}[3]{}
\fi

\ifs
	\newcommand{\SmallReduceVSpace}{\vspace{-0pt}}	
	\newcommand{\ReduceVSpace}{\vspace{-0pt}}

    \setlist{leftmargin=*} %
    \newcommand{\T}[1]{\par\smallskip\noindent\textbf{#1}} %
    \newcommand{\Ts}[1]{\par\vspace{1pt plus 1pt minus 1pt}\noindent\textit{#1}} %
\else
	
	\newcommand{\SmallReduceVSpace}{}	
	\newcommand{\ReduceVSpace}{}

    \newcommand{\T}[1]{\par\smallskip\noindent\textbf{#1}} %
    \newcommand{\Ts}[1]{\par\smallskip\noindent\textit{#1}} %
\fi

\newcommand{\be}{\begin{equation}}
\newcommand{\ee}{\end{equation}}

\newcommand{\para}[1]{\left( #1 \right)}        %
\newcommand{\brac}[1]{\left\{ #1 \right\}}
\newcommand{\sbrac}[1]{\left[ #1 \right]}

\newcommand{\N}{\mathbb{N}}

\newcommand{\unit}[1]{\;\mathrm{#1}} %

\newcommand{\vx}{\checkmark\kern-1.1ex\raisebox{.7ex}{\rotatebox[origin=c]{125}{--}}} %

\usepackage{pifont}
\usepackage{arydshln} %
 
\providecommand{\vs}{{vs.}\xspace}
\providecommand{\ie}{{i.e.,}\xspace}
\providecommand{\eg}{{e.g.,}\xspace}

\providecommand{\etc}{{etc.}\xspace}

\newcommand{\newVar}[2]{\newcommand{#1}{\ensuremath{#2}\xspace}}
\newcommand{\renewVar}[2]{\renewcommand{#1}{\ensuremath{#2}\xspace}}
  \newVar{\server}{S}
  \newVar{\client}{C}
  \newVar{\rclient}{R_c}
  \newVar{\rserver}{R_s}

\newcommand{\DSM}{D^\text{\textsc{SM}}(t)}
\newcommand{\DHBM}{D^\text{\textsc{HBM}}(t)}
\newcommand{\QSM}{Q^\text{\textsc{SM}}(t)}
\newcommand{\QHBM}{Q^\text{\textsc{HBM}}(t)}

\newcommand{\name}{PFI\xspace}
\newcommand{\arch}{SPS\xspace}
\newcommand{\HBM}{HBM switch\xspace}
\newcommand{\HBMs}{HBM switches\xspace}

\newcommand{\batch}{batch\xspace}
\newcommand{\batches}{batches\xspace}

\renewVar{\k}{k}

\newVar{\K}{K}
\renewVar{\N}{N}
\renewVar{\j}{j}
\renewVar{\th}{^\text{th}}

\begin{document}
\title{Scaling Routers with In-Package Optics and High-Bandwidth Memories}

\newcommand{\aut}[2]{#1\texorpdfstring{$^{#2}$}{(#2)}}  %
\author{
  \aut{Isaac Keslassy}{1,2}, 
  \aut{Ilay Yavlovich}{1}, 
  \aut{Jose Yallouz}{1}, \\
  \aut{Tzu-Chien Hsueh}{3}, 
  \aut{Yeshaiahu Fainman}{3}, 
  \aut{Bill Lin}{3}
}%
\affiliation{
$^1$ \textit{Technion} \quad 
$^2$ \textit{UC Berkeley}\quad 
$^3$ \textit{UC San Diego} }
\renewcommand{\shortauthors}{Keslassy et al.}

\begin{abstract}
This paper aims to apply two major scaling transformations from the computing packaging industry to internet routers: the heterogeneous integration of high-bandwidth memories (HBMs) and chiplets, as well as in-package optics. 
We propose a novel internet router architecture that employs these technologies to achieve a petabit/sec router within a single integrated package. At the top-level, we introduce a novel split-parallel switch architecture that spatially divides (without processing) the incoming fibers and distributes them across smaller independent switches without intermediate OEO conversions or fine-tuned per-packet load-balancing. This passive spatial division enables scaling at the cost of a coarser traffic load balancing. Yet, through extensive evaluations of backbone network traffic, we demonstrate that differences with fine-tuned approaches are small. 
In addition, we propose a novel HBM-based shared-memory architecture for the implementation of the smaller independent switches, and we introduce a novel parallel frame interleaving algorithm that packs traffic into frames so that HBM banks are accessed at peak HBM data rates in a cyclical interleaving manner. We further discuss why these new technologies represent a paradigm shift in the design of future internet routers. Finally, we emphasize that power consumption may constitute the primary bottleneck to scaling.
\end{abstract}

\maketitle
\sloppypar %
\section{Introduction}

\textit{The goal of this paper is to apply two major scaling  transformations from the computing packaging industry to internet routers, \ie heterogeneous integration of HBMs and chiplets, and in-package optics,
by introducing a novel router architecture that can harness these emerging technologies.} 

The computing industry is undergoing a profound shift driven by the slowing of traditional scaling laws and the rapid growth of data-intensive applications. The long-standing trends of Moore’s law and Dennard scaling no longer sustain exponential gains in transistor density and power efficiency, creating pressing challenges for future system performance. Simultaneously, the rise of generative AI and other data-centric workloads has heightened the demand for massive compute and memory. To address these pressures, the semiconductor industry has embarked on two transformative directions: \textit{heterogeneous integration within packages} and \textit{in-package optical interconnects}.

Heterogeneous integration enables the co-packaging of multiple \textit{processing chiplets} and three-dimensionally stacked \textit{High Bandwidth Memory (HBM)} modules~\cite{HBM, HBM4} within a single package. Recent HBM4 stacks~\cite{jedec,jedec-tom,skhynix,skhynix-ee}  provide up to $64\unit{ GB}$ of memory capacity and $20.48\unit{ Tb/s}$ of bandwidth, allowing multi-stack assemblies to reach multi-terabyte memory capacities and petabit-scale memory bandwidth within a compact footprint~\cite{HBM-micron}. Panel-scale substrates can further accommodate many such chiplets and HBM modules, significantly increasing the compute and memory density per package~\cite{panel500}.

In parallel, \textit{optical package I/O} and \textit{optical interconnects within the package}, referred together as \textit{in-package optics},
have emerged as key enablers
for scaling communication bandwidth~\cite{LM1, LM2, LM3}. Current photonics packaging supports over $100\unit{Tb/s}$ of optical I/O, with technology roadmaps targeting an order-of-magnitude increase through wider fiber ribbons~\cite{LM1} and higher wavelength multiplexing density~\cite{Rizzo22,Deng24}. 

Together, these emerging technologies enable package-level systems with unprecedented compute, memory, and communication bandwidth densities. They motivate a fundamental rethinking of internet router architectures. This paper proposes a novel router-in-a-package design that exploits HBM memory, processing chiplets and in-package optics to implement a petabit-per-second router within a single integrated package. 

\T{Our architecture introduces two key contributions:}

\T{(1)~Split-Parallel Switch  (\arch)}, a router architecture that spatially splits (without processing) incoming optical fibers across multiple smaller, independent switches. This passive but coarse load-balancing approach exploits the fact that data arrives as optical signals to enable efficient high-throughput scaling, without the many Optical-to-Electrical (OE) and Electrical-to-Optical (EO) conversions that limit fine-tuned electronic load balancing~\cite{LBR,rotornet,opera,PPS}, and without the I/O capacity and power overhead that limits mesh-based architectures~\cite{U2TURN, isca24}.

\T{(2)~HBM Switch}, a novel shared-memory design for the smaller switches that leverages HBM stacks to mimic an ideal output-queued shared-memory switch with a small speedup. There is no known design that provides such a performance guarantee at these speeds. These \HBMs differ from prior shared-memory switch architectures~\cite{fast-shared-memory-switches,sms1,sms2,sms3} in that they take advantage of the capabilities of modern HBMs to provide significant switching and buffering capabilities. We propose a \textit{Parallel Frame Interleaving (PFI)} mechanism that (i)~aggregates packets into wide frames to exploit HBM bandwidth, (ii)~cyclically interleaves memory bank accesses for peak utilization, and (iii)~employs cyclical crossbars to avoid any scheduling.

Combining these contributions leads to a potential increase in  router capacity per area by 1-2 orders of magnitude \vs current routers. 

\T{Paper outline.}
\cref{sec:top-parallel} and \cref{sec:top-HBM} present our thought process and design choices behind the SPS and HBM switch architectures, respectively.
\cref{sec:analysis} provides detailed analysis of the proposed designs, and
\cref{sec:guarantees} provides their performance guarantees.
Given our proposed approach is based on passive spatial splitting of fibers into independent smaller and slower switches, which 
constitutes a coarse approach to traffic load balancing, we want to make sure that the approach does not lead to significant congestion or losses. 
\cref{sec:eval} provides
extensive evaluations of backbone network traffic and synthetic cross-datacenter AI workload traffic to demonstrate that differences with fine-tuned approaches are indeed negligible. 
\cref{sec:discussion} discusses the impact of this work on networking future.
We explain why these new technologies and our proposed architecture form a paradigm shift in how we should conceive and design internet routers, including how they impact router buffer sizing and management. 
Finally, \cref{sec:conclusion} concludes and looks to the road ahead. We argue that power consumption
will become the primary scaling limiter for future router implementations.

\section{Top-Level Split-Parallel Switch}
\label{sec:top-parallel}
\subsection{Design process}

We start by describing the thought process that led to our {\em Split-Parallel Switch (\arch)} design. We mention the leading alternatives in the literature and explain why they do not fit our goals of providing a high throughput with a reasonable power consumption.

\T{Design 1.} The simplest design for an $N \times N$ router consists of \textit{a single centralized switch fabric}. 

\T{Challenge 1.} 
A single centralized switch cannot keep up with our needed high rates, as it would need prohibitive switching rates as well as memory access rates.

\T{Idea 1.} In an $N \times N$ router, we want to introduce $H$ smaller switches to decompose the switching work of our large router into $H$ smaller conceptual routers.

\T{Design 2.} A standard approach with a large number $H$ of smaller switches is to organize them in a $\sqrt{H} \times \sqrt{H}$ two-dimensional \textit{mesh} 
configuration, \ie each smaller switch is connected to its adjacent neighbors in the same row or column.
Incoming packets enter some smaller switch, and pass through a number of intermediate smaller switches %
before exiting.

\T{Challenge 2.} Passing through the many intermediate hops wastes a significant portion of the link and switch capacity, and power. For instance, in a $10\times 10$ mesh, the guaranteed capacity is at most 20\% of the total capacity for an arbitrary admissible traffic pattern, wasting 80\% of the capacity and power~\cite{U2TURN}. A recent work~\cite{isca24} proposes to add instead extra I/Os to the switch chiplets %
to implement the pass-throughs. Still, the number of required extra I/Os would be substantial.

\T{Idea 2.} We want to make sure that each packet goes through a number of hops that does not grow with $H$.

\T{Design 3.} We can adopt a \textit{three-stage} network configuration, such as a \textit{Clos} network with three stages of $H/3$.
Using this design, we could schedule packets using a {load-balanced router} approach or another demand-oblivious schedule~\cite{LBR,rotornet,opera}, or a parallel packet switch architecture~\cite{PPS}.

\T{Challenge 3.} These approaches assume that each input can adopt a packet-by-packet load balancing, and each output can implement a reordering buffer. 
These are near-impossible to implement in optics, so in practice, all three stages will need electronics. 
In addition, within the package, we cannot organize all the smaller switches next to each other, so we will also need {additional} optical waveguide-based links between consecutive logical switches.
Thus, a three-stage architecture would need three electronic stages separated by optics, \ie three OEO conversion stages that would increase the power consumption,
{both because of the conversions themselves and because the three stages split the processing and memory functions into more chiplets.}

\T{Idea 3.} Organize the $H$ switches such that a single OEO conversion is needed, \ie make sure that all the electronic processing for a packet is confined within a single switch. %

\T{Design 4.} We propose to organize our $H$ smaller switches as \textit{$H$ parallel switches}, such that any incoming packet only accesses \textit{a single} smaller switch and goes through a single OEO conversion. %
We cannot use electronics to load-balance packets among the $H$ stages, because it would add an OEO conversion. Thus, we rely on \textit{a poor man's solution} of evenly splitting the incoming fibers at each input. We assign $1/H$-th of the fibers to each switch. That is, given $F$ fibers per input, we split the fibers in a straightforward manner by connecting the first $F/H$ fibers that enter each input to the first switch, \etc %
This solution is non-optimal and does {not} guarantee a perfect load-balancing as in electronic per-packet load-balancing. 
{In fact, the uneven distribution across smaller switches operating at a reduced capacity may potentially lead to packet losses.}
This is the cost of having a single OEO stage.

\T{Challenge 4.} Our architecture satisfies our power goals, but the straightforward fiber splitting has a couple of 
issues:
(1)~Practically, the first fiber of each input is typically connected first, and therefore has a higher load. Thus, the above splitting pattern typically causes a higher load in the first switch, yielding a poorer load-balancing. 
(2)~In addition, an adversarial attacker could exploit the known internal splitting pattern of the fibers even without knowing the specific internal aspects of the router.

\T{Idea 4.} We use a pseudo-random pattern to pick the $F/H$ fibers that connect each input to each middle switch.

\subsection{Architecture}
\label{sec:parallel}

In this section, we describe the \arch architecture. We parameterize the dimensions so it can be scaled to different capacities, and apply a set of implementation parameters that achieve a petabit total I/O in a single package.

\begin{figure}
    \centering
    \includegraphics[width=1.0\linewidth]{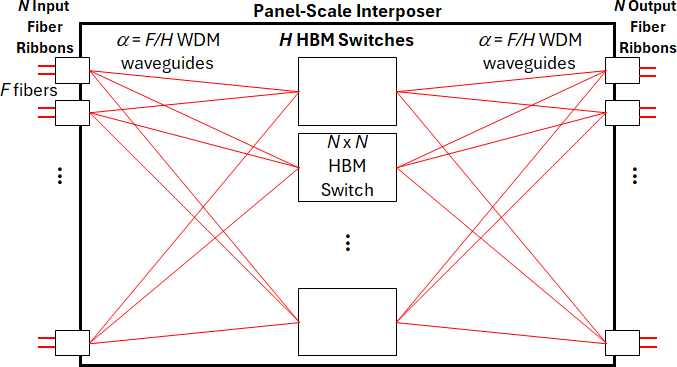}
    \caption{The
    \arch architecture.
    }
    \label{fig:parallel}
    \ReduceVSpace %
\end{figure}

\T{Modules.} \Cref{fig:parallel} illustrates the \arch architecture that fits on a single package. The package includes $N = 16$ 
fiber-ribbon arrays. Each such array comprises $F = 64$ fibers. 
Thus, there are $N \cdot F = 1,024$ 
fibers per package. 

Each fiber carries $W = 16$ {\em input} wavelength-division-multiplexing (WDM) channels, of bandwidth 
{$R = 40$\,Gb/s} each. 
Likewise, for better packaging, each of the {\em same} $N = 16$ fiber ribbons also serves as the {\em egress} of the package, with each fiber also carrying a separate set of $W = 16$ wavelengths as {\em output} WDM channels. 
Overall, a package offers {$N \cdot F \cdot W \cdot R = 655$\,Tb/s} of I/O in each direction, \ie \textit{a total I/O of} $1.31$\,Pb/s.

Within the package, there are $H = 16$ parallel $N \times N$ \HBMs. Each such switch must support $1/H=1/16\th$ of the total I/O, \ie $2 (N \cdot F \cdot W \cdot R)/H = 81.92\unit{Tb/s}$ of memory I/O. The \HBMs include several HBM stacks (\cref{sec:HBM}).

At each fiber ribbon, 
wavelengths coming through the $F$ optical fibers are \textit{passively} coupled to the corresponding {wavelengths in the $F$} internal WDM waveguides 
via optical couplers for continuing the optical signal propagation within the package.
From each 
ribbon, a pseudo-random set of $\alpha = F / H = 4$ 
WDM waveguides is connected to each of the $H = 16$ smaller \HBMs. 
Likewise, at the egress, $\alpha = F / H = 4$ WDM waveguides are received from each \HBM at each output fiber ribbon.

\T{Operation.} All traffic incoming at some fiber of some input fiber ribbon is propagated through
its corresponding waveguide, and arrives at one of the \HBMs. There, as detailed in \cref{sec:HBM}, the \HBMs convert the optical signal into  an electronic signal,
extract the packets, and switch them to the \HBM egress corresponding to their output fiber ribbon. They are then reconverted into an optical signal, sent in one of the waveguide channels for this output ribbon, and 
forwarded through the corresponding egress fiber. 

\T{Packaging.} \Cref{fig:package} shows a possible layout within a
2.5D photonics interposer. The $N = 16$ fiber ribbons could be organized as 4 arrays on each side, 
while the $H = 16$ \HBMs could be arranged as a $4 \times 4$ matrix in the middle. For clarity, only the WDM waveguides connected to the first fiber ribbon are outlined. %

\begin{figure}
    \centering
    \includegraphics[width=0.9\linewidth]{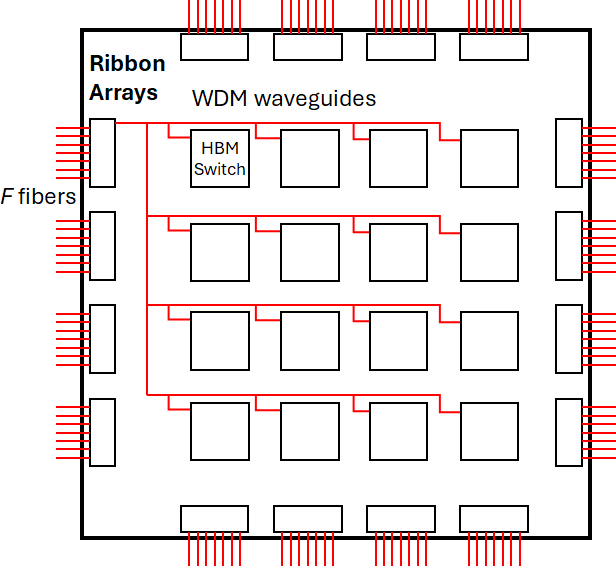}
    \caption{Packaging view of \arch on a 2.5D photonics interposer.}
    \label{fig:package}
    \ReduceVSpace %
\end{figure}

\T{Modularity.} 
{The \arch architecture enables a modular approach, from a single dense $1.31\unit{Pb/s}$ I/O package with 16 HBM switches, to 16 parallel packages of 
$1/16\th$ the capacity.}

\section{HBM Switch}
\label{sec:top-HBM}

\subsection{Design process}
\label{sec:HBM-process}
We now describe the design of the smaller \HBMs.
\T{Challenge 5.} 
Although the SPS architecture introduced in \cref{sec:top-parallel} splits incoming traffic across $H$ smaller switches operating in parallel, each $N \times N$ smaller switch must still handle a significant load of $1/H=1/16\th$ of the total traffic. 
This translates into a total incoming and outgoing traffic of  $2 (N \cdot F \cdot W \cdot R)/H = 81.92\unit{Tb/s}$.

\T{Idea 5.} To support this large combined data rate, we propose to use modern HBMs. %

\T{Design 5.} Our reference design, called the \textit{\HBM}, uses the HBM4 technology~\cite{jedec,jedec-tom} that provides a $2,048$-bit ultra-wide interface organized as $32$ channels of $64$ bits each. 
Recently announced HBM4 commercial offerings~\cite{micron,skhynix,skhynix-ee} have exceeded the HBM4 standard by achieving over $10\unit{Gb/s}$ per bit for a total bandwidth of $2,048 \cdot 10\unit{Gb/s} = 20.48\unit{Tb/s}$. 
By grouping $B=4$ HBM stacks, we obtain $4 \cdot 20.48\unit{Tb/s}=81.92\unit{Tb/s}$ of total bandwidth through $T=4 \cdot 32=128$ parallel channels.

\T{Challenge 6.} To use HBMs at peak data rate, memory accesses need to satisfy complex timing rules for accessing banks, rows, columns, \etc It is hard to reconcile (1)~the arbitrary and non-uniform traffic patterns that the \HBM needs to service with (2)~these complex HBM access rules, while (3)~providing strong and predictable performance guarantees. 
There are two main approaches in the literature. One is to provide a very strong theoretical guarantee and fully implement an ideal output-queued (OQ) shared-memory switch~\cite{fast-shared-memory-switches,sms1,sms2,sms3}. However, given current large HBMs, tracking packet locations (``bookkeeping'') would require prohibitive SRAM sizes of several GBs, and simulating a shadow ideal switch would incur prohibitive compute requirements. Another approach is to randomly spread packets across memory modules~\cite{Shrimali2005PacketBuffers}, then use a large reordering buffer at the output~\cite{yang2017simple,usubutun2023switches,rottenstreich2012switch}. This is akin to packet spraying in datacenters~\cite{RPS,LTCP,CAPS,Stellar}.

Unfortunately, both approaches are oblivious to the specific HBM memory rules and assume \textit{worst-case} random access times, leading to a large throughput loss. Specifically, this worst-case random access needs about $30\unit{ns}$~\cite{jedec} just to activate and close (precharge) banks. 
Giving these approaches the benefit of the doubt and assuming that they can leverage the HBM parallel channels, which did not exist 20 years ago, they would still suffer from throughput reduction factors ranging from $2.6\times$ for $1,500$-byte packets to $39\times$ for worst-case $64$-byte ones. If they don't leverage parallel channels, the reduction can reach $1,250\times$.

\T{Idea 6.} 
We aim to aggressively design our switch to operate at the \textit{best-case} memory access times rather than the worst-case.
To achieve peak data rates, we need to write at once a large quantity of data to harness the many parallel channels available via the HBM's ultra-wide interface. We also want this data to have the same output so it can be read and depart together. To do so, we propose to aggregate incoming packets with a shared output destination into \textit{frames} that can be efficiently striped across all HBM channels in parallel. These frames are similar to a shuttle that waits for passengers to a given destination and departs as soon as it is packed. Frames can contain packets from different inputs as long as they share the same output destination, offering flexible aggregation. These frames are inspired by past works on memory management in single linecards~\cite{iyer2008} and the uniform frame spreading (UFS) algorithm in load-balanced routers~\cite{thesis}.

\T{Design 6.} We devise a \textit{Parallel Frame Interleaving (\name)} algorithm. We outline it next %
before detailing it in~\cref{sec:HBM}.

\Ts{(1) Frame aggregation.}
\name aggregates traffic in two stages. First, at each input, variable-size packets arrive at per-output queues, where they are cut and assembled into fixed-size batches of $k=4\unit{KB}$. Then, at an intermediate buffering stage, batches enter per-output queues, where they are assembled into $K=512\unit{KB}$ 
frames of $K/k=128$ batches.

\Ts{(2) Slicing.}
To prepare for the highly parallel HBM memory accesses, \name implements the intermediate buffering stage by slicing each batch into $N$ equal chunks that are placed in $N$ parallel intermediate memory modules. 
Inspired by prior load-balanced switch architectures~\cite{thesis, LBR,yang2017safe,Lin2010Concurrent} that do not require switch-fabric scheduling, \name uses  
an $N \times N$ cyclical crossbar between the $N$ inputs and the $N$ memory modules. The crossbar staggers the forwarding of the $N$ batch slices from each input to the $N$ intermediate memory modules. Thus, each frame of 128 batches is also sliced into $N$ parts.

\Ts{(3) Bank interleaving.}
\name uses \textit{predictable} and \textit{staggered} bank accesses to achieve peak data rates, which is challenging because frame arrivals are \textit{unpredictable}. To do so, we introduce the idea of partitioning the set of banks into disjoint \textit{bank interleaving groups}, defined to be groups of $\gamma$ \textit{consecutive} banks $\ell$ to $(\ell + \gamma - 1)$. For example, with $\gamma = 4$, $L=64$ banks are partitioned into $L/\gamma=16$ consecutive groups of $\gamma=4$ banks. The first group comprises banks $1$ to $4$, the second banks $5$ to $8$, etc.
Then, given $T$ parallel HBM channels and $\gamma$ banks, \name first writes $1/\gamma\th$ of the frame into bank $\ell$ across all $T$ channels, then $1/\gamma\th$ into bank $\ell+1$, and so on into all banks in the group in a perfectly \textit{staggered} bank interleaving manner. $\gamma$ is defined in a way that makes all these bank accesses independent of each other without breaking HBM memory access rules.

\Ts{(4)~No bookkeeping.} To avoid complex bookkeeping and ensure frame ordering, \name deterministically writes  consecutive frames for the same output to consecutive bank groups, regardless of frame arrival patterns, \ie 
the $n$-th frame for a given output
is written into bank interleaving group $h = \para{n \mod (L/\gamma)}$.
On the output side, frames are read in the same sequence, thus keeping the frame order.

\Ts{(5)~Cyclical output reads.} \name reads the HBM cyclically for each of the $N$ outputs. It then proceeds at the output side symmetrically to the input side, reading frames from the HBM as slices across $N$ memory modules, cutting them into sliced batches, sending all batch slices to their output using an $N\times N$ cyclical crossbar, and finally unpacking batches into variable-size packets.

\Ts{(6)~Performance guarantees.}
As detailed in \cref{sec:guarantees}, \name guarantees \textit{100\% throughput.} 
In addition, with a small speedup, an \HBM with \name can \textit{mimic} an ideal OQ shared-memory switch, \ie 
given the same input sequence to the \HBM and to an ideal switch, any packet departs the \HBM within a finite delay after its departure from the ideal one~\cite{attiya2010}.

\begin{figure}
    \centering
    \includegraphics[width=\linewidth]{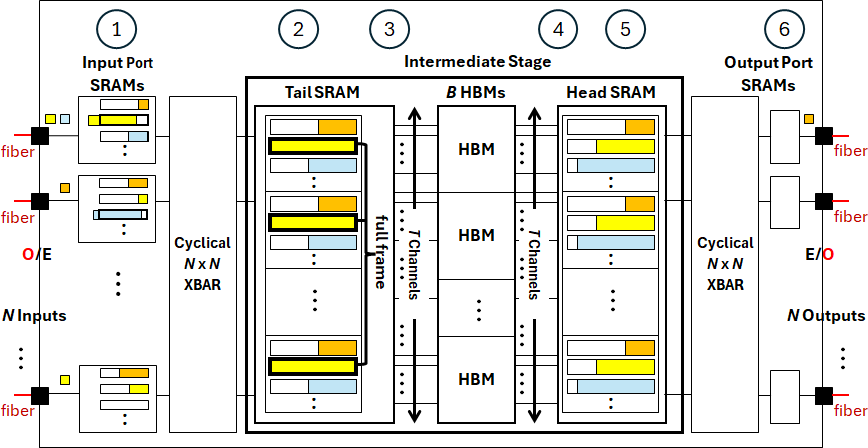}
    \caption{The architecture of each \HBM.
    }
    \label{fig:HBM-N}
\ReduceVSpace %
\end{figure}

\subsection{Architecture and \name Algorithm} 
\label{sec:HBM}

In this section, we detail the internal architecture of the \HBM (\Cref{fig:HBM-N}) and outline how \name handles traffic (\Cref{fig:intuition-N}).

\T{\ding{192} Input port SRAMs.} At each input port, %
incoming optical data from the waveguides is first converted into electronic signals via O/E conversion. Then, a processing chiplet determines the \HBM output for incoming variable-length packets, queues them in SRAM by output destination using $N = 16$ per-output queues, then packs them into \batches of $\k=4\unit{KB}$. Packets may straddle two batches. Once formed, batches for all outputs are (logically) stored in a later FIFO queue. The head-of-line batch is sliced into slices of $\k/N=256$ bytes. %
Then a cyclical crossbar sends each slice to a  different SRAM module in the tail SRAM, always starting from the first SRAM module.

\Ts{Example.} In the first input port of \cref{fig:HBM-N}, 
yellow packets 
for output 2 are queued in the second queue. They have already formed a \batch that is being gradually sent. Additional packets are arriving and starting to form a second batch. 

\Ts{Batch size.} Each input port SRAM must support the writing and reading of packets at the  data rate of $P=\alpha \cdot W \cdot R = 4 \cdot 16 \cdot 40\mbox{\,Gb/s} = 2.56\unit{Tb/s}$,
for a total of $2 P = 5.12\unit{Tb/s}$.
Assuming a $2.5\unit{GHz}$ clock, each input port SRAM can deliver
$2.5\unit{Gb/s}$ 
per bit of SRAM interface. 
Thus, the input port SRAM should provide a $5120/2.5 = 2,048$-bit wide interface. 
The batch size is exactly $N$ times this interface width, $k = 16 \cdot 2,048\unit{bits} =4\unit{KB}$ so that the batch slices can be uniformly spread across the $N$ SRAM modules at the tail SRAM.

\T{\ding{193} Tail SRAM.} The tail SRAM stores the batches, aggregates them into frames, then supports the concurrent writing of the $T$ segments in a frame over the $T = 128$ channels available across the ultra-wide interface of the $B = 4$ HBMs.

\Ts{(i)~Cyclical crossbar.} 
An $N \times N$ cyclical crossbar on the input side connects the $N$ input ports to the $N$ SRAM modules in the tail SRAM, using a cyclic rotation (no scheduling). The crossbar facilitates the input ports in striping frames across the $N$ tail SRAM modules.
Each crossbar port operates at an input data rate of $P = 2.56\unit{ Tb/s}$ via a $2,048$-bit interface.
As the crossbar implements a cyclic rotation, it can just be implemented by simple 1-dimensional multiplexors at the crossbar outputs with cyclic selections.
Alternatively, the crossbar could be replaced with an $N \times N$ {\em mesh} using spatial-division-multiplexing.
\ie the $2,048$ bits available at each input can be split into $N$ sets of $2,048/N = %
128$ wires, each set going to one of $N$ outputs for parallel transfers rather than cyclic connections, which may simplify the overall logic.

\begin{figure}
    \centering
    \includegraphics[width=\linewidth]{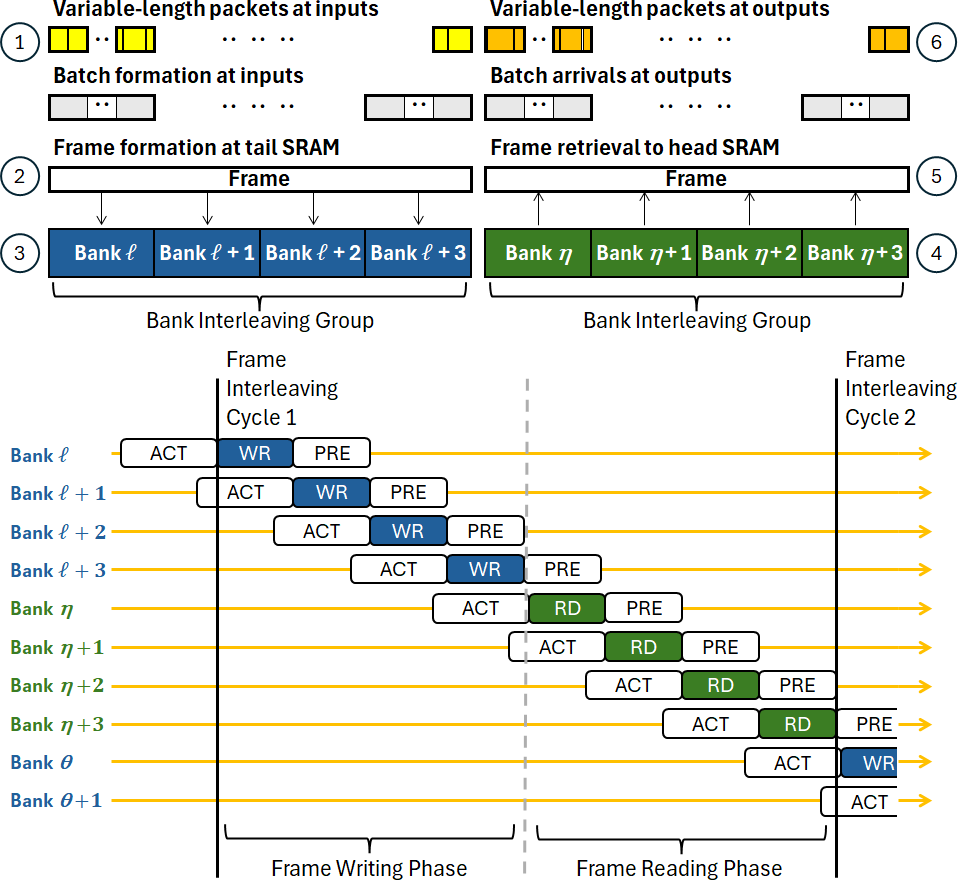}

    \caption{Intuition for the \HBM operation.}
    \label{fig:intuition-N}
\ReduceVSpace %
\end{figure}

\Ts{(ii)~SRAM modules.}
The tail SRAM is organized into $N$ separate SRAM modules. Each module is divided into $N$ logical areas corresponding to $N$ per-output queues. Each batch reaches all modules in a staggered way, such that each batch slice is stored in a per-output queue in its module. When the queue size of a module reaches $K/k=128$ batch slices, it forms a new frame slice, all modules doing so for the same frame in a staggered way. 
Frame slices for all outputs are then stored in a shared logical FIFO queue.

\Ts{(iii)~Memory width.} Each tail SRAM module is $2,048$-bit wide, operating at 2.5 GHz, as for each input port SRAM module.
Each HBM channel is $64$-bit wide, for a total of $512$ bits per group of $T/N = 128/16 = 8$ channels.
As HBM4 operates at 10 Gb/s per bit, \ie $4\times$ the data rate per bit in comparison with the 2.5 GHz SRAM clock, the $2,048$-bit SRAM interface can be readily serialized 4-to-1 to the $512$-bit HBM interface via the HBM controller. 
The group of $4$ HBMs provides $T = 4 \cdot 32 = 128$ channels across an ultra-wide interface, so $4 \cdot 2,048 = 8,192\unit{ bits } = 1,024\unit{ bytes}$ could be 
written into or read from the HBM group 
in parallel. 

\T{\ding{194} Writing into the HBMs.}
The blue elements in \cref{fig:intuition-N} illustrate how \name writes a frame into the HBMs. To exploit the ultra-wide HBM interface with $T$ channels, \name first writes a segment (blue WR) of size $S$ at bank $\ell$ into each of the $T$ channels in parallel. Using staggered bank interleaving, it does so $\gamma=4$ times for each frame, into  consecutive banks $\ell,\ell+1,\ell+2,\ell+3$. Thus, the frame size is $K = \gamma \cdot T \cdot S\unit{bytes}$. For each bank write, \name  in parallel closes the previous bank (precharge \textit{PRE}) and opens the next one (activation \textit{ACT}). 

Our reference design uses $S = 1\unit{KB}$, which is the smallest integer multiple of the HBM4 burst-length that satisfies the HBM four-activation window constraint~\cite{jedec} with our bank interleaving schedule, \ie such that at most four banks are activated at once, while also being a unit fraction of a row length. 
We also choose $\gamma$ so that (i)~the precharge (closing) of the first bank in one group completes before the activation of the first bank in the next, to ensure that we can continue the perfect staggered bank interleaving seamlessly from one group to the next, \ie each group is independent from the previous one; and (ii)~$\gamma$ should also satisfy a four-activation window limit that allows at most four concurrent bank activations, to prevent the memory from drawing too much instantaneous current. $\gamma = 4$ satisfies both conditions.
As a result, the frame size is $K = \gamma \cdot T \cdot S = 4 \cdot 128 \cdot 1\unit{KB} = 512\unit{KB}$.

\T{HBM memory organization.} 
The HBM memory is divided into per-output regions. 
This region allocation could be static, or dynamic with large per-output pages. Each region is then subdivided hierarchically: (1)~first into rows; (2)~then into segment-size sub-rows; and (3)~finally into banks. We write and read data in this exact FIFO order. For example, in \cref{fig:intuition-N}, after writing a blue segment into some bank $\ell$, we write the next one in the next bank $\ell+1$, across all $T$ channels.
With a static region allocation, the head, tail, and number of entries of the FIFO can simply be tracked with counters. With dynamic allocation using large per-output pages, a small extra amount of SRAM would suffice to track pointers to these large pages. 

\T{\ding{195} Reading from the HBMs.} \name cyclically reads a frame for each output.
\cref{fig:intuition-N} shows how it uses staggered bank interleaving, as for writes. It accesses the first bank of the next output, reads a segment  (green RD) across the $T$ channels, then does the same for all $\gamma=4$ consecutive banks, collecting a full frame for this output.

\T{\ding{196} Head SRAM.} 
Like the tail SRAM, the head SRAM is organized into $N$ SRAM modules.
Each head SRAM module obtains a full frame slice and cuts it into batch slices that are first stored in a per-output queue, then sent to their  output.
(See \Cref{sec:add_architecture} for more details.)

\T{\ding{197} Output ports.} 
At each output port, the received batches are cut back into variable-length packets, which are then converted back into optical signals via E/O conversion, before going out a waveguide. As the waveguides will be going out on an output fiber ribbon, which will go to another router, we have the freedom to carry the signals on any of the $\alpha = 4$ fibers and any of the $W = 16$ output WDM channels. As in ECMP or dynamic link aggregation (LAG)(\cref{sec:analysis}), we hash the packets across the available waveguides and wavelengths using their flow 5-tuples.

\section{Design Analysis}
\label{sec:analysis}

\T{Router buffer sizing.} The number of HBM memories should be sized to provide enough (i)~\textit{bandwidth} and (ii)~\textit{buffering}.
We found above that to provide the necessary memory bandwidth, we need to group $B = 4$ HBM4 stacks together per \HBM. Using 64\,GB per HBM4 
stack~\cite{HBM-wiki}, 
this translates into a large total amount of $H \cdot B\cdot 64 = 16 \cdot 4 \cdot 64 = 4.096$\,TB of buffering.
To put it into proportion, it means that the router can store up to $(H \cdot B\cdot 64) \cdot 8 /(N \cdot F \cdot W \cdot R) = 4.096\cdot 8/655.36 \approx 51.2$\,ms of buffering. 
This is in line with the old Van Jacobson rule of thumb that a buffer should hold one bandwidth-delay product of packets~\cite{jacobson1990modified}, and much more than the Stanford model of buffer sizing~\cite{Appenzeller04,redux}. It is also much more than a ``core router buffering in the range of 5-10 msec,'' as recommended in a Cisco white paper~\cite{cisco_white_paper}. Cisco's 400-Gbps linecards offer up to 18~ms (Q100 linecard) or 13~ms of buffering (Q200 linecard), \eg $5\unit{ms}$ for Cisco's 8201-32FH~\cite{cisco,queuepilot}.

\T{SRAM sizing.} %
The following theorem provides an upper bound for the total SRAM size (all proofs in the paper are in \cref{sec:proofs}).
    \SmallReduceVSpace
    \begin{theorem}[SRAM size]\label{thm:bounded}
    The total SRAM size is at most $\frac{3}{2}\para{N+1}\K+\para{N+2}N\cdot \k -N^2$. 
    \end{theorem}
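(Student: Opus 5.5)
The plan is to bound each SRAM component of \cref{fig:HBM-N} separately and add the bounds. The components are \ding{192} the input port SRAMs, \ding{193} the tail SRAM, \ding{196} the head SRAM, and \ding{197} the output port buffers. The target breaks into two parts: the $K$-dependent term $\frac{3}{2}(N+1)K$ should come from the tail and head SRAMs, and the $k$-dependent term $(N+2)N\k - N^2$ from the input and output ports. All sizes are counted in bytes, and packets are assumed to have integer byte lengths. Each bound is proved as an invariant that holds at every time slot, by induction on time: we check that every arrival step and every departure step preserves it. The key tools are the deterministic cyclic crossbar and the deterministic service order of \name.

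\textbf{Input ports.} At any time, each of the $N$ per-output queues at an input holds at most one incomplete batch, hence at most $\k-1$ bytes. This is because a batch is handed to the later FIFO as soon as it reaches $\k$ bytes. Next we need a bound on the FIFO of completed batches. Batches leave it at the full line rate $P$ through the $2{,}048$-bit interface, and batches are formed no faster than bytes arrive. So at most one complete batch of $\k$ bytes is pending (being sliced) at any time. This gives $N(\k-1)+\k$ bytes per input, hence $N^2\k - N^2 + N\k$ in total; this is where the $-N^2$ term comes from. \textbf{Output ports.} Symmetrically, each output port holds at most one batch of $\k$ bytes being unpacked, because the head side delivers batch slices at exactly the output rate $P$. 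Summed over the outputs this is $N\k$. Adding the two gives $(N+2)N\k-N^2$.

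\textbf{Tail SRAM.} Summed across the $N$ modules, each of the $N$ per-output logical queues holds strictly less than one frame's worth of slices ($<K$ bytes) before a frame is formed. On top of this there is the FIFO of completed frames awaiting the HBM write. The HBM write rate equals the total input rate. A frame is written in $\gamma$ staggered segments across all $T$ channels, and this takes exactly the time to receive $K$ bytes at aggregate rate $NP$. Hence at most one full frame is waiting while another is being drained, giving a tail occupancy bound of about $NK + K = (N+1)K$ minus the already-drained fraction. \textbf{Head SRAM.} Frames are read cyclically, one per output. A frame read for output $j$ is drained at rate $P$, while the round-robin returns to $j$ only after $N$ frame-read times, each taking $K/(NP)$. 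So each per-output queue refills just as it empties.

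The main obstacle is the exact constant $\frac{3}{2}$, i.e.\ obtaining $\frac{3}{2}(N+1)K$ jointly for the tail and head SRAMs rather than the naive $2(N+1)K$. The plan is to exploit the linear filling/draining profile. In the head, output $j$'s frame arrives all at once (at HBM rate) and then drains linearly over one round-robin period. At a given instant the $N$ outputs are staggered in phase, so their residuals are $K, K(N-1)/N, \dots$, whose sum is about $(N+1)K/2$. I would first try to prove this staggered-phase invariant for the head, giving $\frac{1}{2}(N+1)K$, and separately show the tail is at most $(N+1)K$. If the tail cannot be bounded that tightly, the fallback is to pair the tail's partially-written frame with the head's partially-read frame so that their fractions sum to at most one frame.
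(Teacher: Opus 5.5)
\textbf{Overall.} Your decomposition is the paper's. You use the same four components and the same input-port bound $N(k-1)+k$ per input. Your head bound $\frac{N+1}{2}K$ is also the same, and your staggered-residual sum justifies that term more explicitly than the paper does. The gap is in the lemma you use for the input ports and the tail: that at most one complete batch (resp.\ frame) can be pending because the drain rate equals the arrival rate.

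\textbf{The lemma is false.} Rate equality limits how fast batches are formed in the long run, not how bursty their completion is. The $N$ per-output queues fill independently. Suppose that, when input $i$ is connected to the first tail module, no queue holds a complete batch but each is within $k/N$ bytes of one. Then the $k$ bytes that can arrive in the next crossbar period (one $k/N$-byte packet per output) complete $N$ batches together. These batches then leave only one per period.

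\textbf{How the paper fixes the input ports.} The number $N(k-1)+k$ still holds, but only as a bound on total occupancy, not on its composition. It must also be proved period by period rather than slot by slot. The paper's argument looks at each instant when input $i$ is connected to the first tail module. Either no batch is complete, so occupancy is at most $N(k-1)$ and at most $k$ bytes arrive before the next such instant. Or a batch is complete, so the input sends a whole batch during the period and its occupancy cannot grow.

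\textbf{The tail has the same problem.} Take all $N$ partial frames one batch short of $K$, and let one batch arrive for each output. This breaks your claim of ``one frame waiting while another is drained''. The paper instead runs the total-occupancy argument at the start of each frame interleaving cycle and adds a pigeonhole step. An output's frames must be written into consecutive banks in order, so a completed frame is not automatically writable. Once there are more than $N(L-1)$ frames, some output holds $L$ of them, one per bank, and the tail then writes at full rate. Your step ``the HBM write rate equals the input rate'' assumes this without proof.

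\textbf{Your $k$-part also hides a reallocation.} Your bound of one batch per output port uses the same exact-alignment reasoning. But while one batch is being unpacked the next is already arriving, and a packet straddling two batches needs both. For this reason the paper budgets $2k$ per output port. It compensates by tightening the tail: a partial frame consists of whole batches, so it holds at most $K-k$ bytes, giving $N(K-k)+K$ instead of your $(N+1)K$. Your total matches the theorem only because $Nk$ moved from one component to the other. To make the proof sound, either justify the one-batch output bound or adopt the paper's split. The fallback of pairing tail and head residuals is not needed.
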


We find that the total needed SRAM size is $14.5\unit{MB}$,
which can be easily implemented today in SRAM. 
It is a small cost we pay for assembling the large frames that are needed to fully exploit the large HBM parallelism deterministically while accessing memory at peak data rates. Note that if we were to adopt an alternative statistical approach, akin to packet spraying in datacenters, we would not need to pay this cost of forming frames, but would need to pay an alternative memory cost for the packet reordering buffer, which seems to be an order of magnitude higher depending on the acceptable reordering rate~\cite{yang2017simple,usubutun2023switches,rottenstreich2012switch}.

\T{Power estimate.} Each of our $H=16$ \HBMs handles $41\unit{Tb/s}$ of incoming traffic. This is less than the Broadcom Tomahawk 5 BCM878900 switch chip, which can handle $51.2\unit{Tb/s}$ incoming traffic~\cite{TH5}
with a power dissipation of $500$\,W~\cite{TH5-500}.
The Broadcom chip can handle all the packet processing and SRAM buffering. Thus, as a first approximation, the packet processing and SRAM buffering for each \HBM should consume at most $500 \cdot (41/51.2) = 400$\,W.
In addition, each HBM4 stack should consume about $75$\,W~\cite{HBM-nextgen}, so $B=4$ stacks consume about $300$\,W. 
Finally, commercially available silicon photonics can perform OEO conversion at about 1.15 pJ/bit~\cite{GF-fotonix, Moazeni2017,Levy2023CICC,Li2024OJSSCS,Raj2023ISSCC,Hsueh2025}.
At $81.92\unit{Tb/s}$ of I/O per \HBM, the power required for OEO conversion for each \HBM is about $94$\,W.
Overall, each \HBM should consume about $400+300+94 = 794$\,W, yielding about $12.7$\,kW for all $H = 16$ \HBMs. %
This can be compared to the Cerebras WSE-3 wafer-scale processor~\cite{Kennedy24}, which is commercially deployed and consumes $23$\,kW of power. It uses a state-of-the-art liquid and air cooling system~\cite{Mujtaba21, Kennedy24, Kundu25} to dissipate the heat. %
The same system could be used to cool our internet router, which consumes just above half the power.

\T{Area estimate.} The  Broadcom die size is estimated at $800\unit{mm}^2$~\cite{TH5}.
Assuming conservatively $800\unit{mm}^2$ for the processing chiplet and $4 \cdot (11\unit{mm} \cdot 11 \unit{mm}) = 484\unit{mm}^2$ for the HBM stacks, we get about $1,284\unit{mm}^2$ per \HBM. For the 16 \HBMs, we obtain $16 \cdot 1284\unit{mm}^2 = 20,544\unit{mm}^2$, which is under 10\% %
of the $500\unit{mm} \cdot 500\unit{mm} = 250,000\unit{mm}^2$ of the surface area of a panel-scale substrate. Thus, the area should not be a bottleneck at this stage.

\T{Traffic matrix at \HBMs.} 
We assume a spatial split of the incoming fibers into the HBM switches. This may possibly lead to uneven traffic matrices (TMs) between HBM switches.
In practice, most incoming links would use either \textit{ECMP} or dynamic \textit{link aggregation (LAG)} bundling~\cite{ergun}. LAG is defined in the link aggregation control protocol (LACP) in the IEEE %
802.1AX standard~\cite{ieee-lacp}, bundling multiple physical fiber-optic connections into a single logical link. Both with ECMP and with LAG, traffic is load-balanced across the physical links using a hashing algorithm, \eg based on the flow 5-tuple~\cite{juniper-hashing,nokia-hashing}. 
Thus, all packets of a flow are sent on the same physical link to prevent reordring. If hashing achieves a good distribution, then a spatial split would leave a similar TM at each HBM switch, yielding good performance. This is formally proved in \cref{thm:global-full-throughput}~(\cref{sec:guarantees}). 

However, if the different fibers come from distinct sources, then the uneven TMs may cause packet loss. For example, assume that with $N=2$, $F=2$ and $W=1$, the $4 \times 2$ normalized traffic matrix (TM) looks like: 
\be 
TM =
 \begin{matrix}
\substack{\text{Input 1: Fiber 1} \\ \text{Input 1: Fiber 2}} \\
\\
\substack{\text{Input 2: Fiber 1} \\ \text{Input 2: Fiber 2}}
\end{matrix}
 \begin{pmatrix}
 \underline{\textit{0.6}} & \underline{\textit{0.3}}  \\
  0.3 & 0.6   \\
  \hline 
 \underline{\textit{0.5}} & \underline{\textit{0.4}}  \\
  0.3 & 0.15 \\
 \end{pmatrix},
\ee
where each row represents a fiber, the first group of two rows represents an input, and each column represents an output. The sum on each column (output) is less than 2, therefore it can be serviced by the 2 output fibers and the TM is admissible. 
 After the spatial split, the first HBM switch may get the first fiber (first row) of each input (shown in underscored italics):
\be
TM_\text{HBM} =
 \begin{pmatrix}
  0.6 & 0.3  \\
  \hline 
 0.5 & 0.4  \\
 \end{pmatrix}.
\ee
The sum on the first column is 1.1, exceeding the HBM switch output capacity of 1. Thus, even if the HBM switch is optimal, it will drop packets at a rate of 0.1
out of its total arrival rate of 1.8, yielding a loss rate of 5.6\%. We can see how this spatial split is sensitive to large TM elements that are in different fibers and have correlated outputs. %

Evaluations show that expected packet losses are much lower than in this simple toy example, often under 0.1\%~(\cref{sec:eval}).
Also, a slightly smaller switching capacity will simply lead TCP flows to reduce their rate and adapt to it, thus naturally decreasing the loss rate.

\T{Latency and bypass.} Waiting to fill frames for scheduling can increase latency. Instead, when there are no full frames, we can use frame padding~\cite{padded-frames,thesis} to decrease latency. 
A \textit{bypass} mechanism can further reduce latency: If the tail SRAM sees that there is nothing stored in the HBM for output $k$, then when it is time for the head SRAM to cyclically get a frame for output $k$, the tail SRAM can bypass the HBM and send the head-of-line (potentially padded) frame  directly to the head SRAM.

\T{Frame interleaving cycle.} 
The transitions between the frame write/read phases total about $2\%$ of the cycle duration~\cite{jedec}.
Since reads and writes are mandatory in a switch, 
we consider 
these 
transition delays
as part of our baseline 100\% throughput.
Also, HBM4~\cite{jedec} provides independent single-bank refresh operations that can be readily hidden 
without affecting the frame interleaving cycle time. 
See \Cref{sec:add_architecture} for more details.

\T{Clock.}
The electronics in the proposed design are confined within each HBM switch, with photonic connections among the $H$ HBM switches. Clock synchronization of chiplets with multiple HBM stacks at the scale of our HBM switch has already been demonstrated in commercial processors~\cite{nvidia-blackwell,nvidia-blackwell-ultra}. Clock distribution and power delivery to the $H$ HBM switches occur in the redistribution layer (RDL) of the panel substrate.

\T{Reliability.} The parallel-switch architecture relies on many parallel and independent smaller  switches. Similarly to sharding and network slicing, it enables an incremental switch management, \eg by gradually introducing new OS versions in the smaller switches. It also reduces the failure blast-radius 
of any fiber or smaller switch.

\T{Packing packets.} When packing packets into frames, a natural question arises whether we want to allow a variable-length packet to span across two frames, \ie at the end of the first and the beginning of the next. It increases utilization, but then the two frames from the same input may not arrive at an output back-to-back, as there may be another frame from another input in-between. Thus, this may slightly increase memory needs and complexity. On the other hand, not packing packets only loses $\frac{1.5\unit{KB}}{128\unit{KB}}\approx 1\%$  of utilization.

\section{Performance Guarantees}\label{sec:guarantees}

We want to demonstrate that the \HBM ``\textit{behaves like}'' an ``\textit{ideal}'' switch. Formally, (i)~by \textit{ideal}, we mean a \textit{FIFO OQ shared-memory switch}, \ie a switch where all incoming packets arrive directly at the egress and are serviced in order; and (ii)~by \textit{behaves like}, we mean that the \HBM switch \textit{mimics} the OQ shared-memory switch, \ie there exists some fixed delay $D$ such that if both switches get the same packets at the same times, and some packet exits the OQ shared-memory switch at time $t$, then it will leave the \HBM before time $t+D$.

The \HBM needs to collect frames. Therefore, it does \textit{not} directly mimic a shared-memory switch. For instance, if a single small packet is the only incoming packet, it will immediately leave a shared-memory switch, while it may get stuck forever at an unmodified \HBM, waiting to form a frame. This may be the curse of modern wide memories that intrinsically need frames to achieve high rates. 
However, we can still prove strong properties of the \HBM. We show that for normal traffic, it obtains \textit{100\% throughput}, like the shared-memory switch. We also show that if all packets are multiples of the frame length, then it does mimic a shared-memory switch. Finally, we prove that the \HBM also mimics a shared-memory switch if it is allowed to have a \textit{small speedup}  to clear leftover packets using padded frames.

In all these results, we assume that both a shared-memory switch and the \HBM have infinite HBM buffer sizes, thus decoupling the switching from the specific memory partition policy that is orthogonal to this paper~\cite{abm, reverie}. %

\SmallReduceVSpace
\begin{theorem}[Mimicking with frames only]\label{thm:mimic-frames}
Given only frame arrivals, \ie arrivals of frame-sized bursts with a common output for each burst, the \HBM mimics a FIFO OQ shared-memory switch.
\end{theorem}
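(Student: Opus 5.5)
Under frame-only arrivals, the plan is to show that every frame meets a pipeline in the \HBM whose delay is bounded by a constant independent of the arrival pattern. The comparison is then made frame by frame against the FIFO OQ shared-memory switch. Three quantities will be fixed: the frame transmission time $\tau = K/P$ at one input port, the time $\tau_w$ to write a frame into the HBMs (a $\gamma$-bank staggered write across all $T$ channels), and the read time $\tau_r$. Because the HBM group bandwidth equals the aggregate I/O, the design has $\tau_w,\tau_r \le \tau/N$ up to the small phase-transition overhead that the paper folds into the baseline. Within one output-reading cycle, the \HBM therefore serves $N$ frame writes and $N$ frame reads.

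First, I will bound the ingress part. A frame-sized burst for output $j$ arriving at input $i$ is cut into $K/k$ batches, and each batch is cyclically striped over the $N$ tail SRAM modules. Since the burst consists entirely of output-$j$ traffic, it fills an output-$j$ per-output queue slice in each module. The concern is that batches from different inputs interleave in the tail queue for $j$, so the "frame" formed in the tail SRAM may mix several bursts. I will argue that this does not matter. The tail queue for $j$ receives batches in order of their arrival at the crossbar. With cyclic rotation, each input delivers one full batch every $N$ slots, so each batch reaches all modules within a delay of at most $N$ slice times. A frame slice is therefore completed at most a constant time after its last constituent batch arrives. Frames for all outputs then enter a FIFO whose service rate is one frame per $\tau/N$, while arrivals are at most $N$ frames per $\tau$. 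A standard rate argument shows this FIFO holds at most $O(N)$ frames, giving a constant bound $D_1$ from the arrival of the last bit of a frame's data to the completion of its HBM write.

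Second, I will treat the central part. The deterministic bank-group rule writes the $n$-th frame for output $j$ to group $n \bmod (L/\gamma)$ and reads it in the same order. As a result, frames for $j$ leave the HBM in the order they were formed, which is FIFO in the arrival order of their batches. I will compare this with the OQ switch: in the OQ switch, output $j$ sends its bytes in arrival order at rate $P$, one frame's worth every $\tau$. In the \HBM, output $j$ is read once per cycle of length $\tau$, one frame per visit. Byte $b$ of output $j$'s cumulative arrivals departs the OQ switch at time $t_b$, where $t_b$ is determined by a Lindley-type recursion. In the \HBM, the frame containing $b$ is ready in the HBM by $a_b + D_1$, where $a_b \le t_b$ is its arrival time. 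Since output $j$ is visited every $\tau$ and the \HBM output backlog evolves by the same recursion with one frame per visit, an induction on frame index will show that the \HBM departure satisfies $d_b \le t_b + D_1 + \tau + D_2$. Here $D_2$ covers the head-SRAM slicing, the output cyclical crossbar, and unpacking, which is bounded in the same way as ingress.

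The main obstacle is this induction. Because of batch interleaving, the bytes in an \HBM frame need not match the bytes in any single arriving burst, so the \HBM's frame boundaries differ from the bursts. I will handle this with cumulative byte counts rather than burst identities. Let $A_j(t)$ be the bytes for output $j$ that have arrived by time $t$. Both switches serve output $j$ FIFO at the same long-run rate. Since every arrival is a full frame, the amount of data for $j$ available in the \HBM lags $A_j(t)$ by at most the $N-1$ partially filled batches, one per input. In addition, $A_j(t)$ is a multiple of $K$ at burst boundaries, so no byte waits indefinitely for a frame to fill. Showing that this bounded lag, together with the one-visit-per-$\tau$ service, yields $d_b - t_b \le D$ is the step I expect to require the most care.
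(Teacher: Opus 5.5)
\textbf{Verdict: genuine gap.} Your skeleton matches the paper's: bounded ingress delay through FIFO stages, per-output FIFO order in the HBM from the bank-group rule, one read per cycle for each output, and bounded egress delay. You are also right that a tail-SRAM frame for output $j$ can mix batches from several inputs. The paper's proof does not confront this; it treats each arriving frame as a unit crossing FIFO stages whose backlogs are bounded by Theorem~\ref{thm:bounded}.

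However, the step you defer is where the proof actually lives, and your sketch does not close it. Your second step asserts that the frame containing byte $b$ is in the HBM by $a_b+D_1$. Your first step only measures $D_1$ from the arrival of that frame's \emph{last} constituent batch, which may belong to a later burst from another input.

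Your proposed repair, a byte-count lag, fails in two ways:
\begin{itemize}
\item The constant is wrong. Besides at most one partial batch per input, the output-$j$ tail queue alone can hold up to $K/k-1$ batches not yet in a frame.
\item A bound of the form ``HBM-resident output-$j$ data $\ge A_j(t)-C$'' is the kind Theorem~\ref{thm:full-throughput} uses to get equal throughput. By itself it allows the last $C$ bytes before a long silence for $j$ to sit in SRAM throughout that silence, so it cannot give mimicking.
\end{itemize}
Appealing to $A_j(t)$ being a multiple of $K$ at burst boundaries does not help. If bursts for $j$ from different inputs overlap continually, there is never a moment with no burst for $j$ in progress, so no such moment has the data present cut into whole frames. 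In any case, ``no byte waits indefinitely'' is not a uniform bound.

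What is missing is a uniform \emph{time} lag: every output-$j$ byte arriving by $t-D_1$ is in the HBM by $t$, for a constant $D_1$. The argument runs as follows.
\begin{itemize}
\item Fix a time $x$. Every batch in the output-$j$ tail queue at $x$ belongs to a burst for $j$ that started by $x$.
\item Those bursts total a multiple of $K$. Arriving at line rate, they all finish by $x+\tau$, so all their batches have entered the queue by $x+\tau+c$. The queue has the same FIFO order in all $N$ modules because of the staggered rotation.
\item Frames are cut from consecutive runs of $K/k$ batches, so every batch present at $x$ then lies in a completed frame.
\item Byte $b$'s batch completes within $k/P$ of $a_b$, since bursts are multiples of $k$. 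It then crosses the input FIFO and crossbar in bounded time.
\item Frames then reach the HBM in bounded time. Cite Theorem~\ref{thm:bounded} with per-bank FIFO service, as the paper does, rather than a single-FIFO rate argument. The latter ignores that frame $n$ of output $j$ is tied to bank group $n \bmod (L/\gamma)$ while the write cycle sweeps banks in order.
\end{itemize}
This yields a constant $D_1$ (about $\tau$ plus pipeline and queueing constants). Your induction with one frame read per $\tau$ for output $j$ then gives $d_b\le t_b+D_1+\tau+D_2$.
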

\SmallReduceVSpace

Define an arrival traffic as \textit{admissible} whenever a shared-memory switch can service it with a bounded queue size, and define a switch as having \textit{100\% throughput} whenever it can service any admissible arrival traffic with a bounded queue size. 
We obtain: %
\SmallReduceVSpace
\begin{theorem}[100\% throughput for \HBM]\label{thm:full-throughput}
The \HBM achieves the same 100\% throughput as an OQ shared-memory switch (even without speedup).
\end{theorem}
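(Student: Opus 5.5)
The plan is to show that, under any admissible arrival traffic, every queue inside the \HBM stays bounded. We use the fact that the \HBM differs from the OQ shared-memory switch only by (a) the partially filled batches and frames waiting at the aggregation stages and (b) a bounded pipelining delay for full frames. Admissibility means the OQ shared-memory switch keeps bounded queues. Equivalently, for each output $j$ the cumulative arrivals $A_j(t)$ satisfy $A_j(t) \le C(t) + Q_{\max}$, where $C(t)$ is the cumulative output line capacity and $Q_{\max}$ bounds the OQ queue.

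First, bound the SRAM stages using \cref{thm:bounded}. Each input port SRAM holds at most one partial batch per output plus the head-of-line batch being sliced. Each tail SRAM module holds at most one partial frame slice per output plus frames waiting to be written. Because the cyclical crossbar and the HBM write path run at line rate $P$ per input (no speedup), full batches and frames are drained at least as fast as they form. Hence these occupancies are bounded independently of the traffic, which is the content of \cref{thm:bounded}. The leftover, not-yet-full data per output is therefore at most a constant $c_0 = K + N k$ bytes.

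Next, reduce the HBM stage to \cref{thm:mimic-frames}. View the stream of completed frames entering the HBM as a frame-only arrival process. Frame $n$ for output $j$ becomes complete only after its $K$ bytes have arrived, and the bytes of completed frames are at most the original arrivals. Therefore the frame-arrival process is itself admissible for an OQ switch with queue bound $Q_{\max} + c_0$, up to a bounded time shift. By \cref{thm:mimic-frames}, the HBM stage mimics an OQ switch on this frame process within delay $D$. Its backlog is then bounded by $Q_{\max} + c_0$ plus $D$ times the total rate. The head SRAM and output ports are symmetric to the input side, so the analogue of the first step bounds them as well.

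Summing these contributions, the total queue in the \HBM is bounded, which gives 100\% throughput without speedup. The main obstacle is the step asserting that the frame process is admissible. The danger is that the cyclical HBM read schedule (one frame per output per cycle) could waste read slots on outputs with no full frame while another output builds up. To handle this, I would argue work conservation per output. Each output $j$ receives a read slot every cycle, and the cycle delivers exactly $K$ bytes, i.e. the output's line capacity over that cycle. So output $j$'s HBM queue decreases by one frame per cycle whenever it is nonempty. Its backlog thus follows a discrete Lindley recursion driven by frame arrivals whose rate is at most $1$ frame per cycle, and that recursion is bounded under the admissibility bound above.
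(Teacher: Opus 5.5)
Your proof is correct in substance and uses the same decomposition as the paper. Leftover partial batches and frames are bounded by \cref{thm:bounded}, full frames are handled by \cref{thm:mimic-frames}, and the bounded pieces are added.

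The paper does the bookkeeping on cumulative departures. It shows $D^{\mathrm{HBM}}(t)\ge D^{\mathrm{SM}}(t)-B$, hence $Q^{\mathrm{HBM}}(t)\le Q^{\mathrm{SM}}(t)+B$. That route is shorter, and it also gives equal long-run throughput for arbitrary arrivals, since $B/t\to 0$.

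You instead bound backlogs stage by stage. This forces you to make explicit a step the paper glosses over. \cref{thm:mimic-frames} compares the \HBM with an OQ switch fed by the \emph{frame} stream, not by the original cells, so one must check that the frame stream is itself admissible. Your per-output Lindley argument is a self-contained substitute for \cref{thm:mimic-frames} at the HBM stage. It rests on one guaranteed frame read per output per cycle, which matches the line rate, and it is more transparent than the paper's appeal to that theorem.

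One statement needs fixing, because your last step depends on it. Bounded OQ queues are \emph{not} equivalent to $A_j(t)\le C(t)+Q_{\max}$. That inequality is only the $s=0$ instance of the correct condition: $A_j(t)-A_j(s)\le C(t)-C(s)+Q_{\max}$ for all $s\le t$.

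For a counterexample, take an output that is idle until time $T$ and then receives traffic at rate $2P$ until $2T$. It satisfies $A_j(t)\le C(t)$ throughout, yet its backlog at $2T$ is $PT$. So with the from-zero bound alone, your Lindley recursion need not stay bounded.

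The interval form does follow from your hypothesis, and it transfers to the frame process for the following reason. The bytes in frames for output $j$ completed during $(s,t]$ either arrived during $(s,t]$ or were partial data already present at time $s$, and the latter is at most $c_0$. That observation is what justifies your queue bound $Q_{\max}+c_0$ for the frame process. Your stated reason, that completed-frame bytes never exceed arrivals, only yields the from-zero bound.
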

\SmallReduceVSpace

We can further generalize this result from the smaller \HBM to the large \arch architecture if we 
model the ECMP or LAG hashing (\cref{sec:analysis}) among the fibers of each input 
as achieving an ideal distribution. 
\SmallReduceVSpace
\begin{theorem}[100\% throughput for \arch]\label{thm:global-full-throughput}
If at each input, all fibers have the same traffic, then the \arch architecture with a set of smaller \HBMs guarantees 100\% throughput. %
\end{theorem}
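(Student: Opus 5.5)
The plan is to reduce the statement to \cref{thm:full-throughput}. First, show that under the equal-fiber assumption the traffic offered to each \HBM is admissible for that \HBM viewed as an $N\times N$ shared-memory switch. Then apply \cref{thm:full-throughput} to each of the $H$ independent \HBMs and conclude.

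First, formalize the arrivals. Let $A_{i,j}(t)$ be the cumulative traffic arriving at input $i$ of the \arch and destined to output $j$ by time $t$. Admissibility of the global traffic for a shared-memory switch of capacity $F\cdot W\cdot R$ per port means that, up to a bounded constant $c$, the traffic to each output does not exceed its capacity:
\[
\textstyle\sum_i A_{i,j}(t)-\sum_i A_{i,j}(s)\le F W R\,(t-s)+c \quad\text{for all } s\le t \text{ and all } j.
\]
A matching condition holds at the inputs by the line rate. The hypothesis that all fibers at an input carry the same traffic I would model as saying that each fiber $f$ of input $i$ carries $A^{f}_{i,j}(t)=A_{i,j}(t)/F$, possibly up to a bounded discrepancy that comes from packet granularity.

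Second, compute the traffic offered to \HBM $h$. By the splitting of \cref{sec:parallel}, \HBM $h$ receives exactly $\alpha=F/H$ fibers from every input ribbon, and the choice of fibers does not matter, pseudo-random or not. Its cumulative arrivals are therefore $\sum_{f\in S_{i,h}}A^f_{i,j}(t)=A_{i,j}(t)/H$. Summing over $i$ and using the displayed condition, the traffic into output $j$ of \HBM $h$ over $[s,t]$ is at most $(F W R/H)(t-s)+c/H$. The right-hand side is exactly the per-output capacity $\alpha W R=P$ of an \HBM output, plus a bounded slack. The same argument handles the input side. Hence the scaled traffic is admissible for a shared-memory switch of port rate $P$: a bounded input-driven queue under a rate-$P$ server stays bounded. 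Then \cref{thm:full-throughput} gives bounded queues in each \HBM.

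Third, note that the \HBMs share no state: each packet enters and leaves a single \HBM, and the egress hashing of \cref{sec:HBM} only assigns departing packets to waveguides of the correct output ribbon. The total queue is therefore a sum of $H$ bounded queues, which is bounded, and this gives 100\% throughput for the \arch.

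The main obstacle is the first step: making ``same traffic'' precise enough that admissibility transfers. If equality holds only in long-run rates rather than pathwise, the bounded-queue definition may fail, because deviations between fibers could accumulate. I would first adopt the pathwise model, equal cumulative arrivals up to a bounded constant, which is the natural idealization of hashing. I would then remark that rate-based admissibility with the strict inequality transfers in the same way.
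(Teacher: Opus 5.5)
Your proposal is correct and takes the same route as the paper: under the equal-fiber hypothesis, every \HBM receives the global traffic scaled by $1/H$. Its per-output load therefore stays within the port rate $P=\alpha W R$, and applying \cref{thm:full-throughput} to each of the $H$ independent \HBMs yields bounded queues overall; your pathwise formulation of admissibility (cumulative output arrivals bounded by capacity plus a constant) is a more careful version of the paper's column-sum argument, not a different approach.
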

\SmallReduceVSpace

We finally prove that the \HBM can mimic a shared-memory switch given any small speedup $1+\epsilon>1$ for its memory components.  
\SmallReduceVSpace
\begin{theorem}[Mimicking with speedup]\label{thm:mimic-speedup}
An \HBM with a small speedup can mimic the behavior of a FIFO OQ shared-memory switch.
\end{theorem}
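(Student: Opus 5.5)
We plan to reduce the statement to \cref{thm:mimic-frames} by turning arbitrary packet arrivals into frame arrivals, with any leftover partial frames cleared by padding, and to pay for the padding with the speedup $1+\epsilon$. The idea is to treat the \HBM with speedup as running the normal \name schedule on a fraction $1/(1+\epsilon)$ of each cycle. The remaining fraction $\epsilon/(1+\epsilon)$ of each cycle serves padded frames. In a padded frame, the head-of-line partial batches and partial frames for one output are closed, padded, and sent through the same write/read path. We use the same bank-group order as regular frames, so FIFO order per output is preserved with no extra bookkeeping.

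\textbf{Step 1: padding bounds the time a partial frame waits.} First, fix a timeout: a per-output queue (at an input port SRAM or in the tail SRAM) that has held a nonempty partial batch or partial frame for a duration $\tau$ is closed and padded. Next, serve padded frames cyclically over the $N$ outputs in the speedup slots. Each output gets a padded-frame slot at least once every $N\cdot K(1+\epsilon)/(\epsilon\cdot r)$ time, where $r$ is the port rate. Hence any byte waits at most some constant $D_1=D_1(N,K,k,\epsilon,\tau)$ before it belongs to a complete (possibly padded) frame in the tail SRAM. The bound follows by counting: at most $N$ per-output queues at each of the $N$ inputs, plus $N$ tail queues, each needing a bounded number of padding rounds. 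Padded frames use only the speedup, so real traffic keeps the full nominal rate.

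\textbf{Step 2: mimic a shadow OQ switch from the frame stream.} Let $A$ be the real arrival sequence, and let $A'$ be the frame-arrival sequence it induces at the tail SRAM, with each frame stamped at the time its last byte arrives. We compare three systems: (a) the OQ switch fed $A$; (b) an OQ switch fed $A'$ whose output runs at rate $1+\epsilon$, so that padding bytes do not slow real bytes; (c) the \HBM. Each byte of $A'$ arrives within $D_1$ of its arrival in $A$, and per-output FIFO order is preserved. A standard coupling argument then shows that (b) departs every packet within $D_1$ plus one frame time after (a). Any cumulative-service lag in (b) is at most the delay shift, because service is work-conserving and the rate $1+\epsilon$ compensates for the padding overhead. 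Next, \cref{thm:mimic-frames}, applied with the time scaled by $1+\epsilon$, gives a constant $D_2$ such that (c) departs every frame within $D_2$ of (b). Finally, the head SRAM and output port unpacking add a bounded $D_3$ (one frame slice plus one batch). Taking $D=D_1+D_2+D_3+O(K/r)$ completes the proof.

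\textbf{Main obstacle.} The delicate step is the accounting in Step 2. Padding bytes consume service that the ideal switch never spends, and we need a uniform rate bound on padding per output, of the form at most one padded frame per $\tau$ time. We must show this bound is at most $\epsilon$ times the output rate in every interval up to an additive constant. Only then is the lag between (a) and (b) bounded uniformly rather than growing with time. We would choose $\tau \ge NK/(\epsilon r)$ and argue interval by interval. A padded frame is created only after $\tau$ time of non-fullness, so over any interval of length $t$ the padding volume is at most $N(K/\tau)\,t+NK$. With this choice of $\tau$, that volume is at most $\epsilon r t + NK$, which yields the required bounded lag.
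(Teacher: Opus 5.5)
\textbf{Gap: the comparison of (c) with (b) in Step 2.} Your intermediate switch (b) drains each output at rate $1+\epsilon$. The speedup, however, applies only to the memory components. The \HBM still puts real bytes on its output waveguides at rate $1$, and padding is stripped before that point.

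Take a congested period of length $T$ at some output during which every frame is full, so your timeout never fires and there is no padding. Then (b) drains real bytes at $(1+\epsilon)r$, while (a) and the output line of (c) drain at $r$. By the end of the period, (b) is $\epsilon r T$ bytes ahead of both. This breaks your chain in one of two ways:
\begin{enumerate}
\item If ``(c) departs every frame within $D_2$ of (b)'' means departure from the switch, it is false.
\item If it means delivery out of the HBM, then the output port accumulates $\epsilon r T$ bytes. So $D_3$ is not ``one frame slice plus one batch'', and you lose the $2N\cdot k$ output-port term used in \cref{thm:bounded}.
\end{enumerate}
The faulty step is invoking \cref{thm:mimic-frames} ``with time scaled by $1+\epsilon$'', because that rescaling would also speed up the output lines. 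The constants $D_1+D_2+D_3$ do not close, since your reference system is faster than the line it must eventually feed.

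\textbf{How to repair it.} Never compare against a faster system. Your padding bound (at most $\epsilon r t+NK$ per output on any interval), combined with \cref{thm:mimic-frames} applied to the memory path alone, shows that each byte reaches its output port by $d_a(x)+C$. The output port is itself a FIFO rate-$r$ server, like (a). The tandem formula $d_c(x)=\max_{y\preceq x}\{d_1(y)+(\text{bytes from } y \text{ to } x)/r\}$ then gives $d_c(x)\le d_a(x)+C'$. This requires either letting the output-port buffer grow, or throttling HBM reads on output-port occupancy and arguing that the $\epsilon r$ slack absorbs the padding.

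You should also drop the claim that per-output FIFO order is preserved from $A$ to $A'$. Bytes from different inputs complete their batches at different times, so they are reordered. The reordering stays within a window of $O(D_1)$, which costs an extra term of order $N D_1$, so it is harmless once accounted for.

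\textbf{Comparison with the paper.} The paper avoids this rate accounting entirely. With $1+\epsilon>(n+1)/n$, regular frames keep exactly $n$ of every $n+1$ SRAM slots, which is their original nominal-rate schedule. So \cref{thm:mimic-frames} applies verbatim, with no rescaling. The extra slot runs a separate pipeline of padded ``marked'' frames. Once per period, the paper picks the VOQ holding the globally oldest bit, instead of using your timeout, which bounds how long leftover cells wait.

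Your merged-FIFO design keeps per-output order inside the HBM without bookkeeping, which the paper's separate pipeline does not discuss. But merging padded frames into the regular stream is exactly what forces the padding-rate and tandem arguments above.
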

\SmallReduceVSpace

\section{Evaluations}
\label{sec:eval}

\T{Overview.} We introduced a new coarse distribution technique based on a simple spatial split of the optical fibers, followed by a fixed pseudo-random assignment to the \HBMs. We saw that if incoming links come from a router that implements ECMP or LAG-based flow hashing   between fibers, different fibers will generally have a similar distribution and the spatial split is not expected to significantly impact throughput~(\cref{sec:analysis}).
However, without ECMP or LAG, the physical fiber split among smaller \HBMs operating at a reduced capacity may result in poor load balancing of the traffic, leading to packet losses. The goal of this section is to evaluate the added loss rate resulting from the poor mixing of this fiber split, assuming that an ideal load-balancing does not generate any loss. 

The evaluations reveal the following key results when using fiber split: 
\begin{itemize}
    \item Using a backbone network workload reveals that there is \textit{zero packet loss} in both Abilene and GEANT networks when implementing the original router traffic matrices, and 
    less than 0.013\% when fitting router sizes to the large size used in this paper~(\cref{sec:network}).
    \item Using a real backbone CAIDA link trace with a synthetic wavelength allocation 
    shows a loss rate 
    of under 0.008\%~(\cref{sec:caida}).
    \item Considering the more challenging cross-DC AI training workloads,  across all evaluated load-balancing schemes, our router-in-a-package architecture maintains under 0.35\% packet loss rates~(\cref{sec:AI}).    
    \item Finally, applying a sensitivity analysis to synthetic WAN traffic shows a loss rate under 0.05\% for both uniform and non-uniform cases (\cref{sec:sim}).
\end{itemize}
Thus, evaluations consistently suggest that the impact of coarse load-balancing should stay minimal.

\subsection{Backbone network workload} \label{sec:network}

\T{Setup.} The primary objective of the proposed router architecture is the implementation of Internet WAN backbone routers. Accordingly, we examine the impact of splitting backbone traffic among the smaller \HBMs. We use SNDlib~\cite{SNDlib} to obtain 
two WAN topologies and network demand-matrix traces: the Abilene and GEANT networks. We extract $30$ consecutive time samples from each trace, corresponding to 30 consecutive intervals of $5\unit{minutes}$ each for Abilene and $15\unit{minutes}$ for GEANT.
To obtain router-level traffic distributions, for each sample, we apply the POP~\cite{narayanan2021solving}  WAN traffic engineering (TE) approach, solving a multi-commodity flow linear program (LP) using Gurobi solver~\cite{gurobi}. 
This LP maps the end-to-end demands into paths over the given topology while respecting link capacities, yielding a TM for each router in the network. 

\T{Metrics.} Given a set of TMs over all routers in the network, we evaluate the loss rate using two metrics:  
\bl
    \item \textit{Average router loss rate,} \ie the average of the individual loss rates at each router.
    \item \textit{Network loss rate,} \ie the total packet loss rate across the entire network divided by the total incoming traffic rate.
\el

\begin{figure}
\centering
\begin{subfigure}{0.48\linewidth}
  \includegraphics[width=\textwidth]{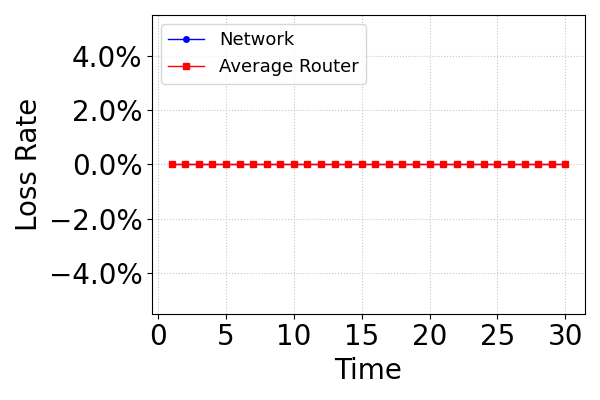}
  \caption{Abilene}
  \label{fig:Ab_no_re}
\end{subfigure}
\hfill
\begin{subfigure}{0.48\linewidth}
  \includegraphics[width=\textwidth]{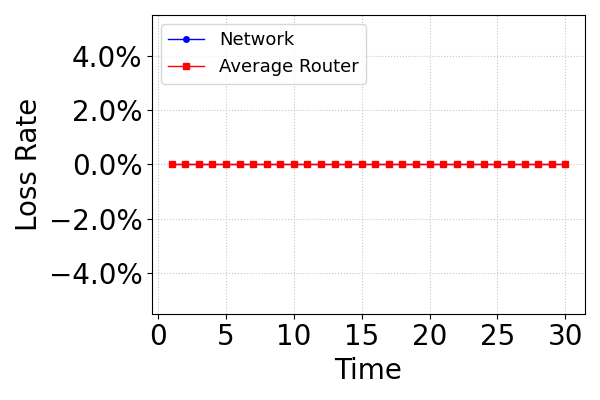}
  \caption{GEANT}
  \label{fig:GE_no_re}
\end{subfigure}
\caption{Packet loss rate (or lack thereof)
with Abilene and GEANT workloads and original router sizes.
}
\label{fig:Backbone_no_resize}
\ReduceVSpace %
\end{figure}

\begin{figure*}[!t] %
\centering
\begin{subfigure}{0.24\linewidth}
  \includegraphics[width=\textwidth]{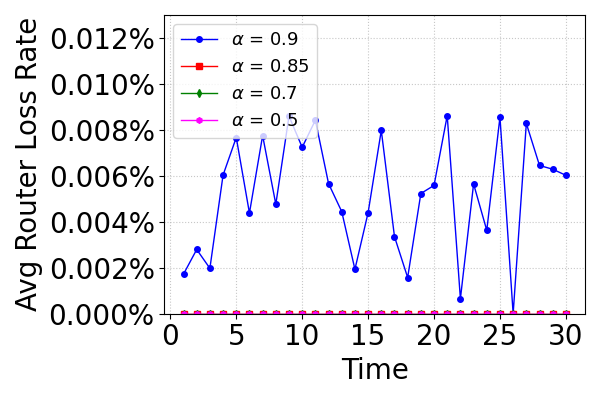}
  \caption{Abilene (router average)}
  \label{fig:Ab_avg}
\end{subfigure}
\hfill
\begin{subfigure}{0.24\linewidth}
  \includegraphics[width=\textwidth]{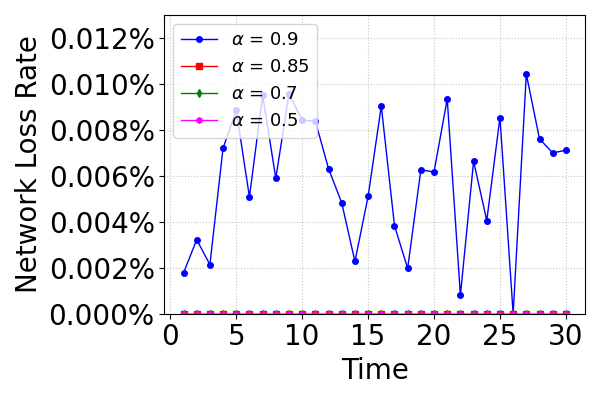}
  \caption{Abilene (network average)}
  \label{fig:Ab_net}
\end{subfigure}
\hfill
\begin{subfigure}{0.24\linewidth}
  \includegraphics[width=\textwidth]{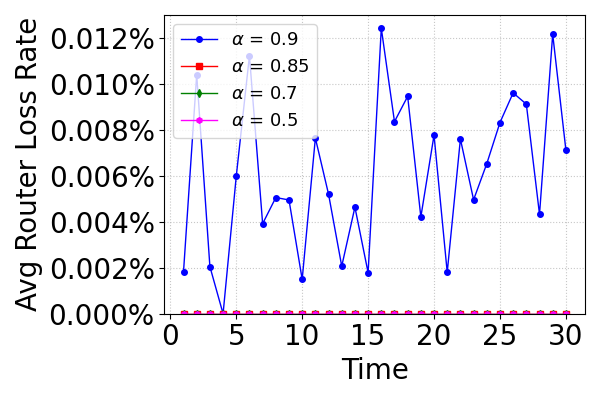}
  \caption{GEANT (router average)}
  \label{fig:GE_avg}
\end{subfigure}
\hfill
\begin{subfigure}{0.24\linewidth}
  \includegraphics[width=\textwidth]{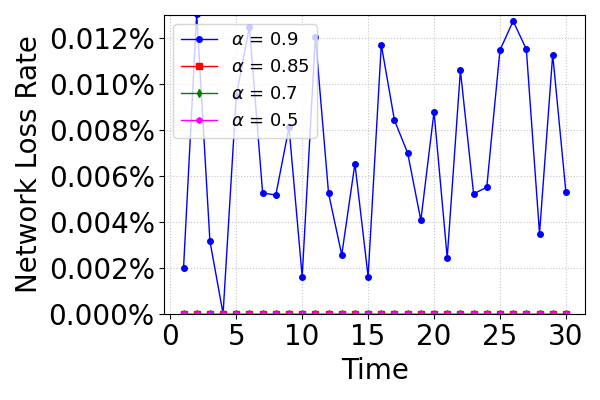}
  \caption{GEANT (network average)}
  \label{fig:GE_net}
\end{subfigure}
\hfill
\caption{Packet loss rate 
using TM resizing to large routers, with different $\alpha$ values.
}
\label{fig:Backbone_resize}
\ReduceVSpace %
\end{figure*}

\T{Original router sizes.} 
We start with the original router sizes.
For each router we assume that the number of \HBMs $H$ equals the number of ports $N$, and that each network flow uses a different fiber, thus using a random \HBM at each router. We scale the rates so that the most loaded input or output of the most loaded router is loaded up to $1$, so all router TMs are admissible.

\Cref{fig:Backbone_no_resize} shows how the resulting loss rate is \textit{zero} for both WANs in all time samples. The pseudo-random fiber split is sufficient to obtain 100\% throughput in this case.

\T{TM resizing.} We convert the small router TMs into large ones with an $\para{N\cdot F \cdot W} \times N = 16,384 \times 16$ router. This conversion involves three scaling challenges: scaling \bl \item the number of inputs, \item the number of outputs, and \item $R$.\el 
We expand each router TM %
to our desired dimension, while preserving the overall traffic and distribution characteristics. Recursively, we split rows and columns. Each row split doubles the number of rows. For each row, we divide its traffic across the two new associated rows by sending a fraction $\alpha \in [0.5, 1]$ of the load to the top sub-row, and $1-\alpha$ to the bottom. This splitting process is repeated from the original number of rows $n$ until the expanded TM reaches size $n \cdot 2^i$, where $2^i$ is the largest factor such that $n \cdot 2^i \leq 16,384$. The same procedure is applied to the columns, ensuring that both sources and destinations are expanded consistently. We scale the rates to obtain a maximum input or output load of 1.

\Cref{fig:Backbone_resize} illustrates the resulting average router loss rate and the network loss rate at Abilene and at GEANT, for different values of $\alpha$ and different time samples. In all cases, the loss rate is below 0.013\%, with loss occurring only when $\alpha = 0.9$.

\subsection{CAIDA traces}
\label{sec:caida}

\T{Setup.} We further analyze the potential loss rate due to the spatial split using real backbone traffic. We use a dataset from CAIDA~\cite{caida2025passive100g} that provides an anonymized packet header captures from a $100 \unit{Gbps}$ backbone link between Los Angeles and Dallas. 
We analyze $60$ consecutive intervals of $100$ms (in total $6$ seconds) from each link direction of a capture taken in August 2025. For each flow defined as a tuple of source–destination IP pairs, we hash the $16$-bit prefix of the source IP address to select one of the $\para{N\cdot F \cdot W} = 16,384$ wavelengths, and hash the destination IP to select one of the $N=16$ output ports. By design, each wavelength is associated with one of the $16$ \HBMs. We use the MD5 cryptographic hash function, and scale the rates so that the most loaded input or output of the most loaded router is exactly loaded up to 1.

\T{Results.}
\Cref{fig:CAIDA_loss_rate} depicts the router loss rate over 60 consecutive measurement intervals of $100\unit{ms}$, with \Cref{fig:CAIDA_dirA} and \Cref{fig:CAIDA_dirB} representing the two opposing link directions. The results show minimal packet loss rates, consistently maintained below 0.008\% (\ie 1 packet loss every 12,500 packets) in both directions, validating the effectiveness of the split-fiber design. The temporal fluctuations observed in both subfigures correspond to dynamic changes in the trace throughout the measurement period.

\begin{figure}[!t] %
\centering
\begin{subfigure}{0.48\linewidth}
\includegraphics[width=\textwidth]{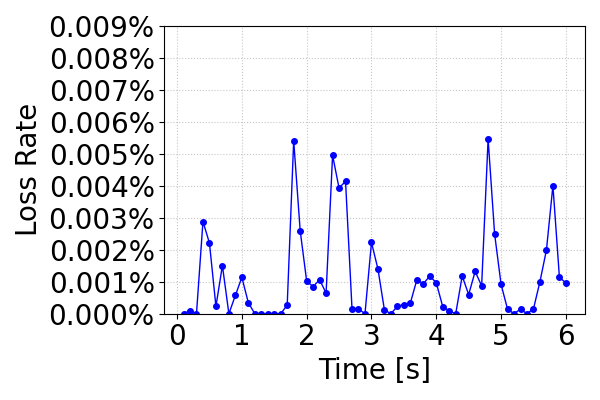}
  \caption{Direction 1}
  \label{fig:CAIDA_dirA}
\end{subfigure}
\hfill
\begin{subfigure}{0.48\linewidth}
  \includegraphics[width=\textwidth]{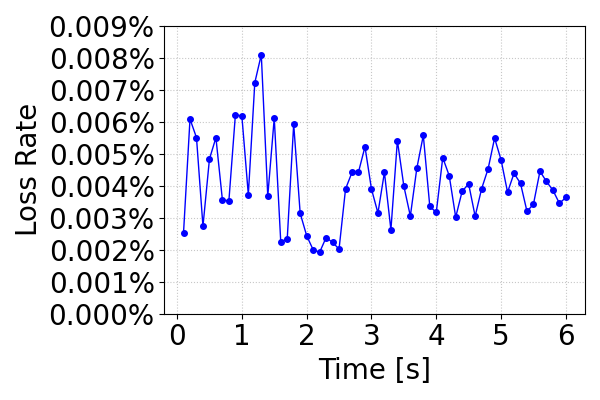}
  \caption{Direction 2}
  \label{fig:CAIDA_dirB}
\end{subfigure}
\caption{Router loss rates over two link directions in a CAIDA trace. 
}
\label{fig:CAIDA_loss_rate}
\ReduceVSpace %
\end{figure}

\subsection{Synthetic cross-datacenter AI workload} \label{sec:AI}

\begin{figure*}[!t] %
\centering
\begin{subfigure}{0.24\linewidth}
\includegraphics[width=\textwidth]{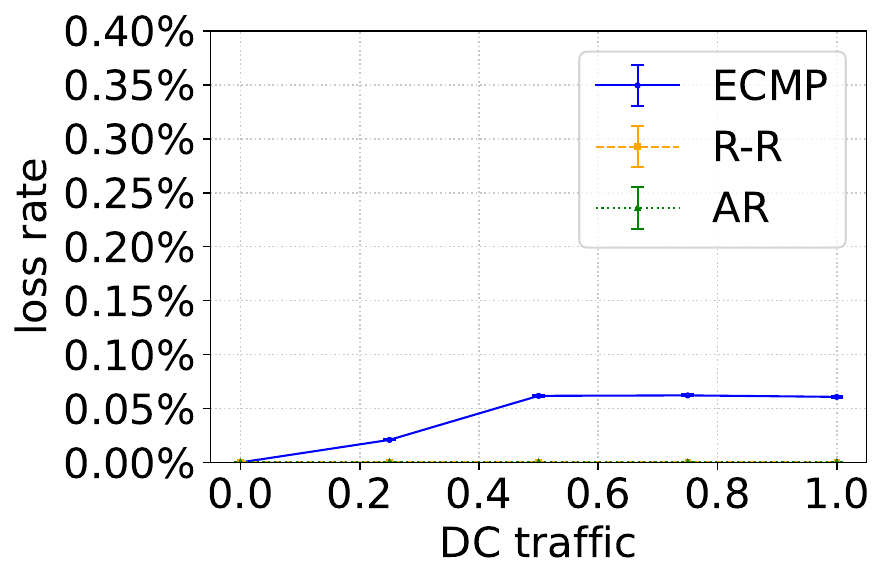}
  \caption{DC traffic}
  \label{fig:DC:01}
\end{subfigure}
\hfill
\begin{subfigure}{0.24\linewidth}
  \includegraphics[width=\textwidth]{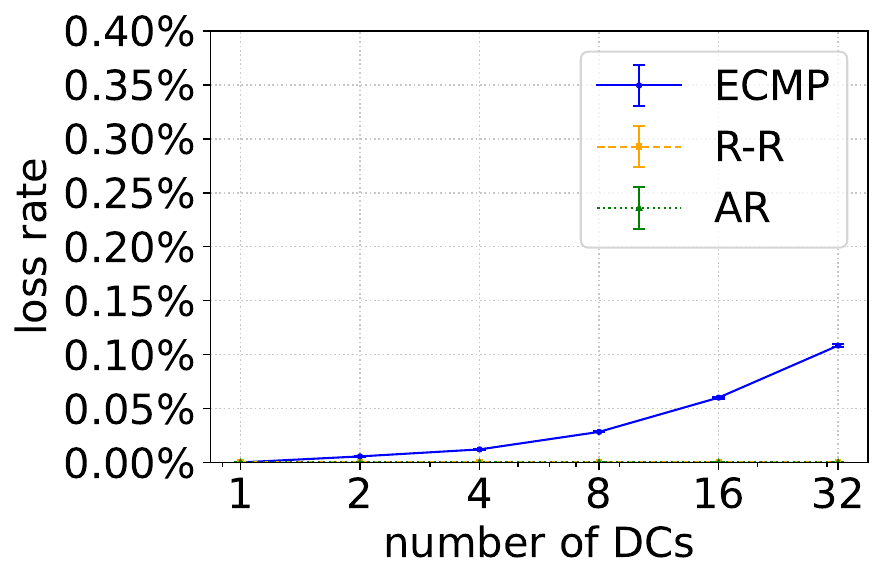}
  \caption{Number of DCs}
  \label{fig:DC:02}
\end{subfigure}
\hfill
\begin{subfigure}{0.24\linewidth}
  \includegraphics[width=\textwidth]{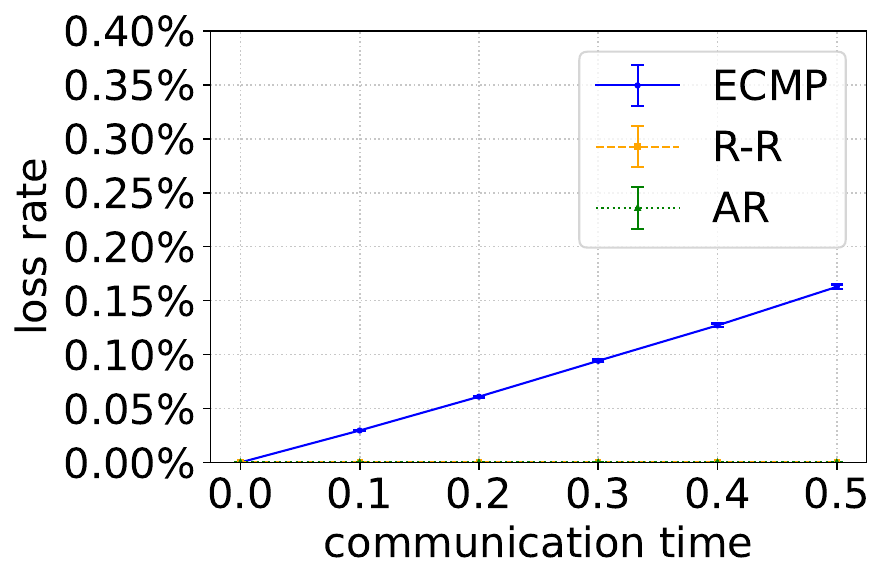}
  \caption{Communication time}
  \label{fig:DC:03}
\end{subfigure}
\hfill
\begin{subfigure}{0.24\linewidth}
  \includegraphics[width=\textwidth]{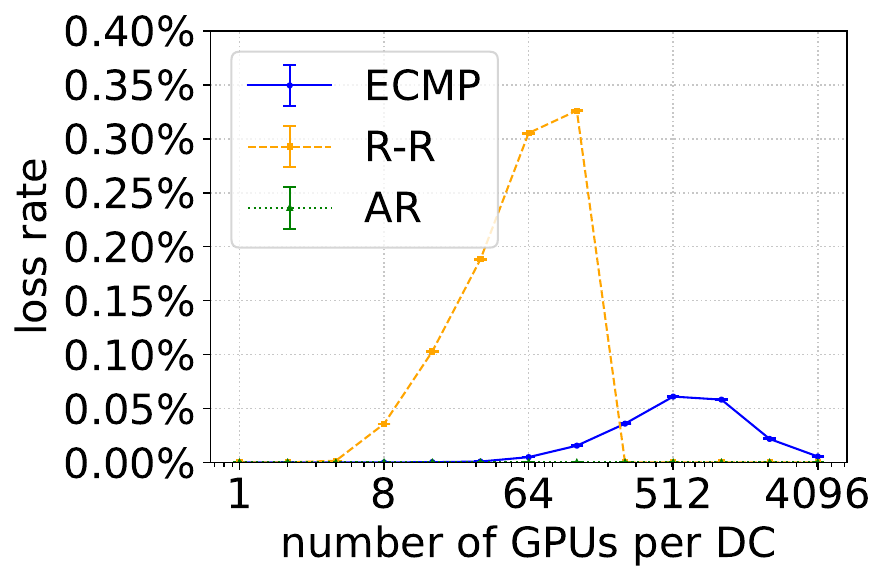}
  \caption{Number of GPUs per DC}
  \label{fig:DC:04}
\end{subfigure}
\caption{Impact of cross-traffic from AI training datacenters on the packet loss rate in a WAN router, given three load-balancing techniques across the DC egress wavelengths. Sensitivity to different parameters:
(a)~the DC ingress/egress traffic rate as a fraction of its link capacity;
(b)~the number of DCs in the PP cluster;
(c)~the communication time fraction out of the total time; and
(d)~the number of GPUs. %
}
\label{fig:DC}
\ReduceVSpace %
\end{figure*}

\T{Setup.} We want to study the impact of the emerging \textit{cross-datacenter (DC)} traffic during AI training across several DCs~\cite{chen2025crosspipe,SDR}. We consider a single cluster of $n$ DCs, each containing $m$ GPUs. Each DC is located at a distinct random router port. It can send and receive traffic at a fraction $\alpha$ of the port capacity $F \cdot W \cdot R %
\approx {40\unit{Tb/s}}$. 
The DC cluster applies cross-DC pipeline parallelism (PP)~\cite{chen2025crosspipe,cocoDC}: The model layers are divided into $n$ stages, and communication occurs at the stage boundaries, as activations and gradients are sent and received. Communication occurs a fraction $\beta$ of the total time. When it occurs, it is either a forward pass or a backward pass with equal probabilities, as in the interleaved 1F1B schedule~\cite{1F1B,chen2025crosspipe}. We assume a maximum allowed wavelength load of 0.95. If a DC sends too much traffic to a wavelength, we downscale all flows proportionally to satisfy the allowed load on this wavelength. Finally, cross-DC traffic has strict priority over the remaining WAN traffic. After computing the DC traffic, we generate synthetic WAN traffic and add it, up to the allowed load (as detailed in \cref{sec:gen}). 
The low-priority WAN traffic mostly fills the wavelengths that do not carry any DC traffic.
For each TM, we then use 100 random HBM assignments as previously to measure the loss. We always compute the average packet loss rate over all packets from all sources sent at all times, including those times without cross-DC communication.
We use the following four baseline values: $n=8$\,DCs; $m=512$\,GPUs per DC stage~\cite{rail}; $\alpha=1$ so a DC can use all of its available capacity (up to the allowed load); and a fraction $\beta=0.2$ of the time devoted to communication (\eg $\beta$ can vary from 1.9\% to 57.4\%, depending on the parameter-efficient fine-tuning method~\cite{alnaasan2024}). We then study the impact of varying each of these four values.

\T{Load-balancing algorithms.} We compare three algorithms that the DCs can use when load balancing egress traffic across all available wavelengths, extending current load-balancing techniques for internal DC traffic~\cite{2025loadbalancing}: 
\bl 
\item \textit{ECMP}, which hashes the GPU flows to pick a wavelength~\cite{SDR,meta}. We model the hashing as perfect with a uniform random choice.
\item A carefully-constructed schedule, extending similar ideas within DCs~\cite{ethereal,qian2024alibaba,samuel2017routing,meta,cohen2025routing,liu2024figret}. 
It is akin to spreading the flows in \textit{round-robin} order across the wavelengths, which we denote as \textit{R-R}. 
\item A per-packet \textit{adaptive-routing (AR)} algorithm, as used in NVIDIA's Spectrum-X within DCs~\cite{nvidia_AR_whitepaper2,ARbook, nvidia_AR_roce, nvidia_AR_whitepaper}. It is modeled as sending the same load on each wavelength. Note that AR is also subject to reordering.
\el

\T{Results.} 
\Cref{fig:DC} illustrates the loss rate as a function of (a)~normalized DC traffic bandwidth, (b)~the number of DCs, (c)~the normalized communication time, and (d)~the number of GPUs per DC. Loss rate is a bit higher, as DCs generate large flows that are more likely to make TMs non-admissible within HBM switches after the spatial load-balancing. Still, the loss rate mostly seems reasonable. It is typically higher for ECMP, which hashes flows and therefore does not load-balance them most efficiently. Interestingly, the worst case in \Cref{fig:sim:01,fig:sim:04}  is when half the maximum quantity is load balanced (specifically, half the traffic in \cref{fig:sim:01}, and half the number of $F\cdot W=1,024$ flows in \cref{fig:sim:04}). The complex intuition is that many wavelengths are then filled up to their maximum load, yet there is enough capacity left for the same DC output, so WAN traffic can fill other wavelengths from other inputs that are also spatially split to the same HBM switch. In addition, R-R behaves perfectly in most cases. However, in \cref{fig:sim:04}, when there are not enough flows to load-balance among the $F\cdot W=1,024$ wavelengths per DC, it maximizes the non-uniformity and therefore incurs more loss (up to 0.35\%). Finally, AR attains the best performance in all cases, since its per-packet load-balancing is optimal.

\subsection{Sensitivity analysis} \label{sec:sim}

\T{Setup.} We generate synthetic traffic matrices (TMs), as it enables us to model the sensitivity of the load-balancing performance to different parameters. The TMs should describe how the various ingress wavelengths traffic are divided among the different egress output ports. Thus, each TM has $N\cdot F\cdot W=16,384$ rows (input wavelengths) and $N=16$ columns (outputs). We proceed in several steps. 

(1)~ Since there are no benchmark router TMs, we generate the TMs using a balls-and-bins approach, by generating flows (\textit{balls}) and then assigning them to random (input wavelength, output) pairs (\textit{bins}), as long as the total load on each input wavelength and output does not exceed some load limits. Whenever this maximum load is at most 1, the TM is said to be admissible. We detail this synthetic TM building procedure in \cref{sec:gen}.

(2)~Once we obtain the TM, we generate 100 different random assignments to the HBM switches and compute the resulting TM at each HBM switch. We assume that each HBM switch has 100\% throughput, \ie it can service any admissible traffic matrix. However, once the rate destined for an HBM-switch output exceeds its capacity, the above-capacity traffic is dropped. The final loss rate is the ratio of the total drop rate by the total arrival rate. We assume a baseline Zipf exponent of $s=1$ (except when looking at a uniform TM), and a baseline number $H=16$ of HBM switches.

\begin{figure*}[!t] %
\centering
\begin{subfigure}{0.24\linewidth}
\includegraphics[width=\textwidth]{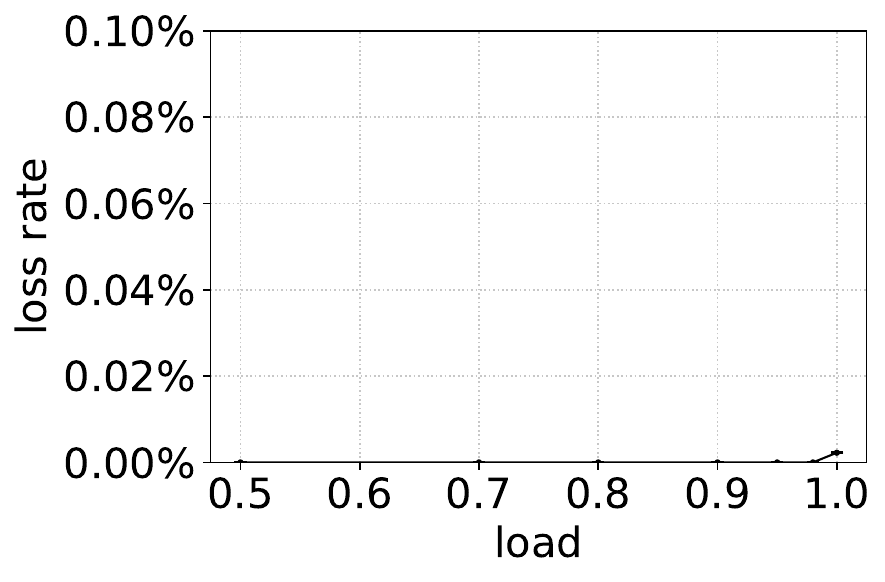}
  \caption{Load (uniform TM)}
  \label{fig:sim:01}
\end{subfigure}
\hfill
\begin{subfigure}{0.24\linewidth}
  \includegraphics[width=\textwidth]{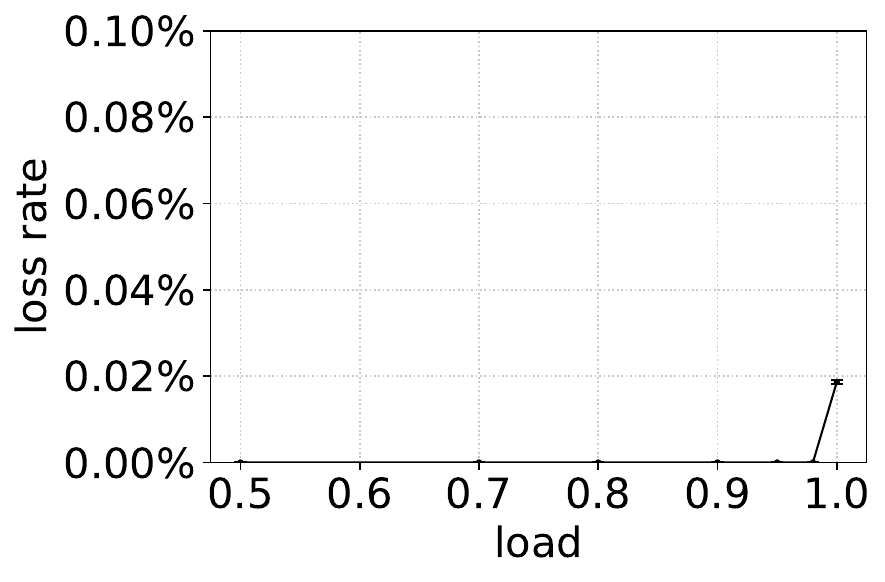}
  \caption{Load (non-uniform TM)}
  \label{fig:sim:02}
\end{subfigure}
\hfill
\begin{subfigure}{0.24\linewidth}
  \includegraphics[width=\textwidth]{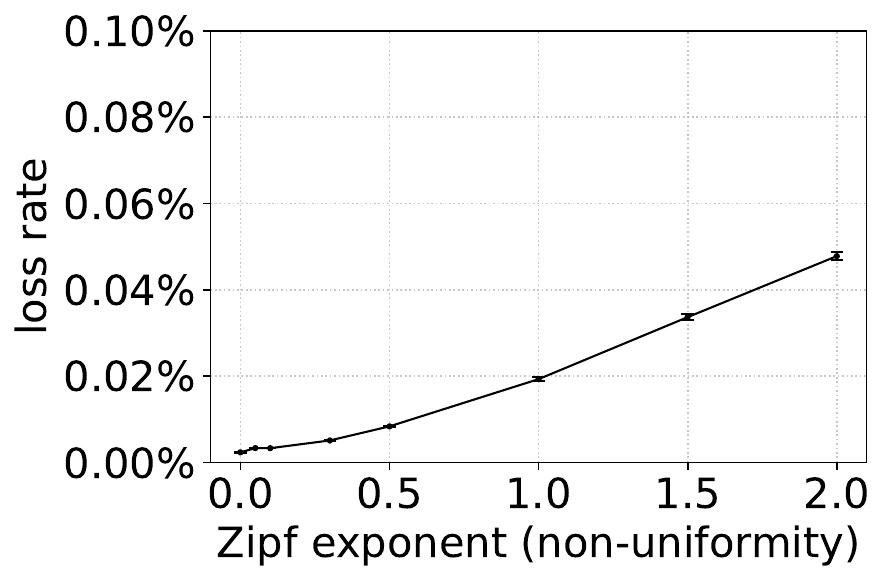}
  \caption{Non-uniformity}
  \label{fig:sim:03}
\end{subfigure}
\hfill
\begin{subfigure}{0.22\linewidth}
  \includegraphics[width=\textwidth]{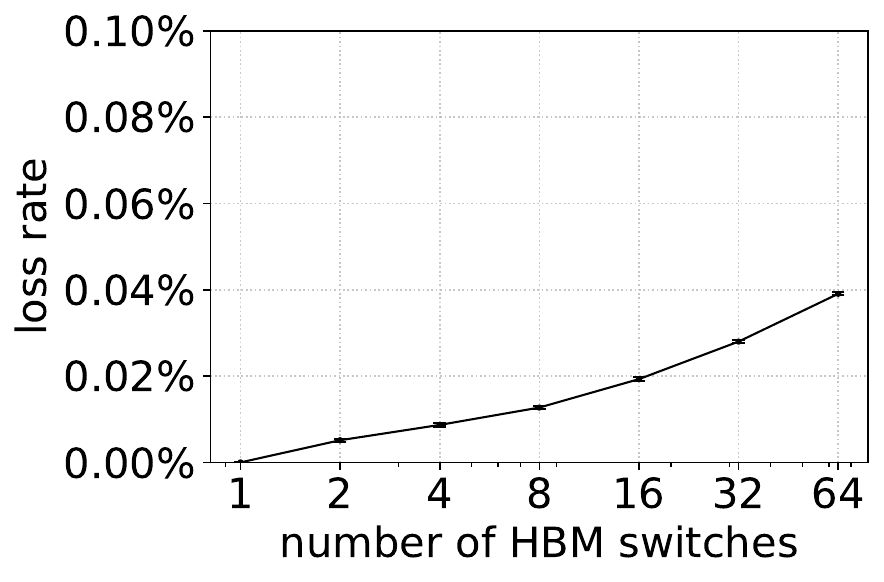}
  \caption{Number of switches}
  \label{fig:sim:04}
\end{subfigure}
\caption{Impact of various parameters on the packet loss rate in an internet router, assuming an admissible TM. (a) and (b)~plot the impact of load in a uniform and non-uniform TM, respectively. 
(c)~illustrates the sensitivity to the Zipf exponent, which reflects the non-uniformity of the TM. 
(d)~shows the impact of the number of HBM switches. %
}
\label{fig:synthetic}
\ReduceVSpace %
\end{figure*}

\T{Results.} %
\Cref{fig:synthetic} illustrates the packet loss rate as a function of (a-b)~the load (in a uniform and non-uniform TM), (c)~the Zipf exponent that represents the TM non-uniformity, and (d)~the number of HBM switches. In all cases, the loss rate is small and below 0.05\%. It slightly increases with non-uniformity and with the number of HBM switches.

\section{Impact on networking future}
\label{sec:discussion}
We now discuss the possible implications of our novel router design on the future networking landscape. 

\T{Capacity increase.} We have shown a potential increase in router capacity per given area by 1-2 orders of magnitude vs. current routers, given the emerging in-package photonics, heterogeneous integration, and HBM4 technologies; 
\eg a Cisco 8201-32FH of 1RU (rack unit) height and HBM %
can accept 32 lines of 400\,Gb/s for an aggregate 12.8\,Tb/s, over {50$\times$} 
less than the input bandwidth of our router, %
while occupying about the same space. %

\T{Router evolution.} Although our reference design is based on the latest HBM4 memories~\cite{micron,skhynix,skhynix-ee}, %
industry roadmaps show that future HBMs are expected to increase by a factor of 4$\times$ in memory capacity and bandwidth~\cite{HBM-nextgen}. %
Emerging memory technologies using monolithic 3D stackable DRAM (multi-DRAM layers per die) with direct stacking of memories and processing chiplets in the same 3D stack~\cite{Hsu25a, Hsu25b} are expected to increase the memory capacity and memory bandwidth of an HBM stack by a factor of 10$\times$ compared to HBM4. These expected improvements will enable us to realize our reference design with far fewer HBM stacks, translating into smaller footprints and power. They can also enable us to realize yet higher capacity routers in a single package.

\T{Buffer sizing.} Previous research assumed off-chip memory (DRAM) bandwidth is a key limiting factor in router performance, which prompted much research into minimizing
router buffer sizes so that faster routers could be designed with only on-chip memory without off-chip access necessity~\cite{redux}.
However, with the integration of many HBM stacks and packet processing chiplets within the same package, neither memory capacity nor memory bandwidth 
appears to be a barrier to faster routers, even when scaling to handle massive incoming traffic made possible by in-package photonics I/Os. 
This \textit{memory glut} means that router buffer sizing research should be revisited. This does not mean that smaller buffers should be automatically discarded. Smaller buffers may still provide better QoS, as in the Netflix experiment~\cite{netflix}, and they may reduce power consumption, but their impact on router capacity is smaller than previously assumed.

\T{Buffer management.} The assumption that ``buffer size is not keeping up with the increase in switch capacity''~\cite{abm} may no longer hold.
Thus, the memory glut may also impact buffer management and buffer-sharing algorithms~\cite{abm, reverie},
reducing the need for complex algorithms to address memory scarcity and ensure fair allocation.

\T{Designing datacenter switches.} Could our switch-fabric designs fit datacenter switches? 
Datacenter switches rely on the same switch-fabric principles, so it looks like a natural extension. 
In fact, since they use less buffering than internet routers,  they may rely on the \arch architecture with even less power consumption, since they would use \HBMs with much less HBM capacity. However, (i)~\textit{latency} is more critical in datacenter networks. Thus, the \HBM may need to be modified to rely on smaller frames. In addition, (ii)~\textit{switch radix} may be a more significant concern in datacenters, with a finer output granularity than in core routers. 
Alternatively, a solution to both problems could be to implement \textit{an \arch architecture that relies on smaller commercial switches as chiplets}. We note that Broadcom has started shipping the Tomahawk 6 switch chip~\cite{tomahawk-reuters} and the Jericho 4 switch chip~\cite{jericho}, both relying on a chiplet architecture.

\T{Wasted internal traffic.}
{The ability to scale routers by 
1-2 orders of magnitude can  save a significant fraction of the current WAN capacity that is devoted to internal traffic needed to interconnect smaller routers.}

\section{Conclusion}
\label{sec:conclusion}
This paper introduces a router-in-a-package architecture that combines in-package optics with HBM4-based shared-memory switches to build a petabit-per-second Internet router within a single package. At the top level, the Split-Parallel Switch (SPS) spatially distributes fibers across several parallel HBM switches with only a single OEO conversion. Internally, each HBM switch implements a Parallel Frame Interleaving (PFI) mechanism that aggregates packets into large frames, exploits ultra-wide HBM interfaces with staggered bank interleaving, and uses simple cyclical crossbars to emulate an ideal output-queued shared-memory switch with 100\% throughput guarantees under admissible traffic.

\T{The road ahead.} What is the bottleneck to further scaling our router designs? 
\cref{sec:analysis} shows that the footprint of our reference design is only a small fraction of the surface area of a panel-scale substrate, so \textit{it is not area-bound}.
Also, 
\textit{it is not yet I/O-bound} since we have assumed {$40\unit{Gb/s}$} data rates per wavelength, whereas $112\unit{Gb/s}$ data rates per wavelength are already possible with PAM4 modulation~\cite{LM3}, and current panel substrates with $500\unit{mm}$ panel edges can support the attachment of many more fiber ribbons. In fact, the main limiting factor in our router-in-a-package design is \textit{power and heat dissipation}. 
As noted in \cref{sec:analysis}, the estimated power of our reference design is just above half that of a commercial wafer-scale processor for which existing state-of-the-art power delivery and cooling mechanisms are available~\cite{Kennedy24, Kundu25, Mujtaba21}. 

Going forward, future HBMs~\cite{HBM-nextgen, Hsu25a, Hsu25b} should require less power per bit, which is significant since HBM accounts for $40\%$ of our overall power.
However, 
\textit{the processing chiplets, with {50\%} of power, could become the next significant bottleneck in scaling routers.} 
Could operators reduce their processing needs if this increases their router capacity? Recent suggestions, such as source routing in SD-WANs~\cite{source-routing} and segment routing~\cite{segment-routing}, may lead the way to future simpler processing.

\T{Acknowledgments.} We thank Sylvia Ratnasamy, Scott Shenker, Alex Krentsel, Sarah McClure, Shai Cohen and Yoav Levi for their insightful discussions and comments. This work was partly supported
by the Louis and Miriam Benjamin Chair in Computer Communication Networks and NSF Award No. 2410053. Support for the Anonymized Two-Way Traffic Packet Header Traces 100G dataset~\cite{caida2025passive100g} is provided by the Center for Applied Internet Data Analysis (CAIDA) at the University of California San Diego.
\par
\textit{This work does not raise any ethical issues.}
\bibliographystyle{ACM-Reference-Format}
\bibliography{mybib}

\appendix

\section{Additional architecture analysis} \label{sec:add_architecture} 

In this section, we extend \Cref{sec:HBM} with additional architecture analysis.

\T{Frame and segment size.} 
As discussed in \Cref{sec:HBM}, the group of $4$ HBMs together provides $T = 4 \cdot 32 = 128$ channels across an ultra-wide $1,024\unit{ bytes}$
interface.
To take advantage of this ultra-wide interface with $T = 128$ channels, we choose a frame size of $K = T \cdot S$ bytes, where $S$ is the size of a {\em segment} that gets written into or read from each channel.
To determine the segment size $S$, we have four factors to consider, as partly discussed in \Cref{sec:HBM-process} and \Cref{sec:HBM}:

\Ts{Burst multiple.} First, once a row has been activated, consecutive columns can be read out (or written into) in burst-mode. In particular, HBM4 has a {\em burst-length} of $8$ and a channel width of $64\unit{ bits}$,
which means that each read/write command transfers 
$8 \cdot 64 = 
512\unit{ bits} = 64\unit{ bytes}$ each time. Therefore, the segment size $S$ should be a multiple of these bursts. 

\Ts{Activation limit.} 
Second, the concurrent activation of rows is power- and current-intensive. To prevent the memory from drawing too much instantaneous current, memory specifications (including HBM4~\cite{jedec}) typically impose a four-activation window 
constraint that limits at most four activated banks within a rolling time window of $t_\mathrm{FAW}$. Therefore, the segment size $S$ should be long enough to satisfy this constraint.

\Ts{Row divisor.} Third, the segment size should also be a unit fraction of the row size, so that a bank row can be evenly divided into segments.

\Ts{Minimum size.} Last, 
larger segments lead to larger frames, which increases the frame formation delay and the SRAM buffering requirement,
so smaller segments are preferred.

As discussed in \Cref{sec:HBM},  we choose $S = 1\unit{KB}$ for our reference to satisfy all these above constraints.

\T{Head SRAM.} As discussed in \Cref{sec:HBM}, the role of the head SRAM is to store frames read from the HBM group for departure via the corresponding HBM swtich outputs.
Like the tail SRAM, the head SRAM is organized into $N$ SRAM modules, each storing segments retrieved from a group of $T/N = 128/16 = 8$ channels. The HBM controller is responsible for matching the HBM's 10 Gb/s bit rate with the 2.5 GHz SRAM clock, with the corresponding deserialization of the $512$ bits from the HBM to the $2,048$-bit SRAM width.

Regarding the $N \times N$ round-robin crossbar on the output side that connects the $N$ SRAM modules in the head SRAM to the $N$ output ports,
it also implements a round-robin connection pattern, which 
can simply be implemented with 1-dimensional multiplexors with cyclic rotations (no scheduling). %
Like the input-side crossbar, this output-side crossbar can also equivalently be implemented by an $N \times N$ mesh with wire-splitting. 

\T{Frame interleaving cycles.}
The frame interleaving cycle shown in \Cref{fig:intuition-N} comprises the writing of $L$ frames from the tail SRAM into $L$ consecutive banks across $T$ channels of the HBM group, followed by the reading of $L$ frames from $L$ consecutive banks across $T$ channels of the HBMs to the head SRAM for transfer to the output ports. There is a small write-to-read transition delay ($t_\mathrm{WTR}$) between the frame write and read phases, and a similar read-to-write transition delay ($t_\mathrm{RTW}$) between the frame read and write phases. The total frame interleaving cycle duration is $(2L \cdot t_S + t_\mathrm{WTR} + t_\mathrm{RTW})$, where $t_S$ is the time for writing or reading a segment of size $S$ in burst-mode. Relative to the $2L$ accesses, $(t_\mathrm{WTR} + t_\mathrm{RTW})$ is about $0.05\%$ in duration, which is negligible. We define this pattern of $2L$ accesses with $t_\mathrm{WTR}$ and $t_\mathrm{RTW}$ as our baseline ``100\% throughput.''
Finally, HBM4~\cite{jedec} provides independent single-bank refresh operations that can be readily hidden between precharge and activation %
when needed, which does not affect our frame interleaving cycle time.

\T{Rate matching.} Each stage of the HBM switch architecture is designed to have matching memory speeds. In particularly, the $N$ input port SRAMs, the $N$ tail SRAM modules, the $B$ HBMs, and the $N$ head SRAM modules are all dimensioned to support $N \cdot P = 40.96\unit{ Tb/s}$ of writing and $N \cdot P = 40.96\unit{ Tb/s}$ of reading, for a combined access rate of $81.92\unit{ Tb/s}$. The two round-robin crossbars are dimensioned to support these data rates.

\section{Proofs}
\label{sec:proofs}

\begin{proof}[Proof of \cref{thm:bounded} (SRAM size)] We will prove the following SRAM upper bounds for the various system components: \\
    \noindent \emph{(i)}~The total SRAM size at input ports is at most $N\cdot (N\cdot \k+\k-N)$.\\
    \noindent \emph{(ii)}~The total tail SRAM size is at most $N\cdot \K+\K-N\cdot \k$.\\
    \noindent \emph{(iii)}~The total head SRAM size is at most $\frac{N+1}{2}\cdot \K$. \\
    \noindent \emph{(iv)}~The total output port SRAM size is at most $2 N\cdot \k$. 

    Let's now prove component by component:

\textit{(i)}~Consider input port $i\in\sbrac{1,N}$, and let's prove that its input port SRAM size is at most $N\cdot K$, yielding the result. 
We focus on the periodic time slots where input $i$ is getting connected to the top tail SRAM. Input $i$ can receive and send at most $K$ bits between these time slots. We distinguish three cases. (1)~If $i$ does not have a full frame at such a time slot, then all of its $N$ frame buffers have at most $K-1$ bits. Since it receives at most $K$ bits by the next such time slot, this satisfies the result till the next time slot. (2)~Else, if it is the first time slot with a full frame following a time slot without one, then since it could only have grown by at most $K$ bits in the last period, its size is at most $N\cdot (K-1) + K = N \cdot K$, also satisfying the result. From then on, $i$ is work-conserving and writing full frames, so its size cannot increase anymore in any contiguous period of full frames, and the result is satisfied. In particular, it holds (3)~at the periodic time slots where it is getting connected to the top tail SRAM. When it finishes writing all of its full frames, it is connected again to the top tail SRAM and we get back to case (1).

\emph{(ii)}~The proof follows similar lines. We focus on the periodic time slots where the \textit{frame interleaving cycles} start. The tail SRAM can only receive and send up to $L$ frames of size $K$ bits each between these time slots. 
Let's prove that its number of such frames does not exceed $N \cdot L$. We distinguish three cases. (1)~If the tail SRAM has at most $N\cdot (L-1)$ frames at such a time slot, then since it receives at most $L$ frames by the next such time slot, this satisfies the result till the next time slot. (2)~Else, if it is the first time slot with strictly more than $N\cdot (L-1)$ frames, then since it could only have grown by at most $L$ frames in the last period, its size is at most $N\cdot (L-1) + L = N \cdot L$, also satisfying the result. Moreover, since there are at least $N\cdot (L-1)+1$ frames, by the pigeonhole principle, at least one output holds at least $L$ frames. Since for each output, the tail SRAM writes into sequential banks (mod~$L$), this means that the tail SRAM has at least one frame for each bank and can write into all $L$ banks, \ie write at maximum rate. From then on, the tail SRAM size cannot increase anymore, and its size stays upper-bounded by $N\cdot L$ until the next time slot where the {frame interleaving cycle} starts. This keeps holding at (3)~all such time slots with more than $N\cdot (L-1)$ frames.

\emph{(iii-iv)}~At each head SRAM and each output port SRAM, we just need to buffer one frame of size $K$ bits per output, and there are $N$ outputs.

Finally, we sum up all SRAM  upper bounds to conclude.
\end{proof}

\begin{proof}[Proof of \cref{thm:mimic-frames} (Mimicking with frames only)]
Assume that only fully-formed frames arrive, both at the shared-memory switch and at the HBM switch. We saw that all SRAM sizes are bounded in the HBM switch. In addition, frames  keep getting serviced in a FIFO manner in each input port SRAM and in each per-bank queue at the tail SRAM. 
Thus, an incoming frame necessarily joins the HBM switch in a bounded time. Likewise, a frame departing the shared HBM memory group also joins the output port SRAMs in a bounded time, with a guaranteed service that cannot be preempted by frames destined to other outputs. Therefore, any incoming frame reaches the main shared memory and later departs the switch within a bounded delay difference from the ideal shared-memory switch.    
\end{proof}

\begin{proof}[Proof of \cref{thm:full-throughput} (100\% throughput for HBM)]
Consider some cell arrival process with a cumulative number of arrivals $\brac{A(t)}$ by time $t$. Given this arrival process, assume that by time $t$, $\DSM$ (resp. $\DHBM$) cells would depart the shared-memory switch (resp. the HBM switch). 
\Cref{thm:bounded} proves that a bounded number of the $\DHBM$ cells are stuck in the input port SRAMs before becoming frames. 
Once frames depart the input port SRAMs, \cref{thm:mimic-frames} proves that a bounded number of additional frames within the $D(t)$ are still not serviced by the HBM switch. 
Thus, we are guaranteed that there is some constant $B$ such that for any $t$, the number $\DHBM$ of cells that have departed the HBM switch satisfies 
$\DHBM \geq \DSM-B.$ 
Since $\lim_{t \to \infty} \frac{B}{t} = 0,$ this implies that both switches have the same throughput.
In particular, when $\brac{A(t)}$ is admissible, let the queue size of the shared-memory (resp. HBM) switch be $\QSM=A(t)-\DSM$ (resp. $\QHBM=A(t)-\DHBM$). Then $\QHBM-\QSM=\DSM-\DHBM \leq B$. Since $\brac{A(t)}$ is admissible, there is a constant $B'$ such that for any $t$, $\QSM \leq B'$. It follows that $\QHBM \leq B+B'$, 
thus the HBM switch can also service $\brac{A(t)}$ with a bounded queue size. The HBM switch achieves the same 100\% throughput for admissible traffic as the shared-memory switch.
\end{proof}

\begin{proof}[Proof of \cref{thm:global-full-throughput} (100\% throughput for \arch)]
No matter the pseudo-random assignment to the HBM switches, the TM at each HBM switch is the same. Therefore, for each given output $j$, the TM column sum at each HBM switch is the same. Hence, it equals the column sum for output $j$ in the parallel-switch TM scaled down by $1/H$. This implies that the smaller switch is just a slowed down version of the larger switch. For each $j$, a shared-memory switch would achieve 100\% throughput, and therefore by \cref{thm:full-throughput}, so would each HBM switch. This implies that the parallel switch guarantees 100\% throughput as well.
\end{proof}

\begin{proof}[Proof of \cref{thm:mimic-speedup} (Mimicking with speedup)]
Let $n>0$ be such that $1/n<\epsilon$, so $1+\epsilon>1+1/n=(n+1)/n.$ During any period of $n$ SRAM slots, the speedup enables the SRAMs to operate $n+1$ times. Assume that in every such period, they operate $n$ times by servicing regular frames, without any change for regular frames; and once by servicing marked frames at marked time slots, which will be defined below. 
Note that regular frames do not see any worsening in their service times.

Once per period, the HBM switch picks the VOQ with the oldest bit among all the input port SRAMs (\ie the bit that has waited the longest in the input port SRAMs), forms a frame using padding, and marks the frame. Then, it slowly services this marked frame in the next periods using the marked slots only, forming a pipeline, until it reaches the output. Thus, all widow cells are guaranteed to leave within a bounded amount of time. In addition, \cref{thm:mimic-frames} proves that the service times of the regular frames are within a bounded delay of their services times in a shared-memory switch. Combining both results yields the mimicking theorem.  
\end{proof}

\section{Synthetic WAN Flow Generation}
\label{sec:gen}

\T{WAN flow generation.}
We assume that in the WAN, the flow rate distribution is log-normal~\cite{Zhang2002}. Traces from 2015-2016 show median flow rates around $5\unit{kbps}$ and 99th percentiles around $600\unit{kbps}$, doubling every 6-7 years~\cite{Bauer2021}. Thus, we triple these values to set the log-normal distribution parameters ($\mu=-4.2$ so $e^\mu=0.015\unit{Mbps}$, and $\sigma=2.06$). 

\T{WAN TM generation.}
Next, we use a Zipf distribution with exponent $s$ to represent the non-uniformity of the output destinations~\cite{elhanany2002analysis}.
Given a traffic load $\rho \leq 1$ and a Zipf exponent $s$, we constrain the total load at each input wavelength to be at most $\rho$, and the total load at the $k$-th output to be at most $\rho/k^s$. %
For example, in the uniform case with $s=0$, all output loads are also at most $\rho$. Each time we generate a new flow rate, we assign it an input wavelength uniformly at random and a random output following the Zipf distribution. We insert it only if all our constraints are satisfied. We stop the packing process once 30 consecutive flows cannot be inserted. Note that a single TM can involve packing up to some $10^{10}$ flows in our evaluation.

\end{document}